%% file: main.tex
\documentclass{article}

\usepackage[final,main]{neurips_2026}

\makeatletter
\renewcommand{\@noticestring}{Accepted at NeurIPS 2026.}
\makeatother

\usepackage[utf8]{inputenc}
\usepackage[T1]{fontenc}
\usepackage[hypertexnames=false]{hyperref}
\usepackage{url}
\usepackage{booktabs}
\usepackage{amsfonts}
\usepackage{amsmath}
\usepackage{amssymb}
\usepackage{amsthm}
\usepackage{mathtools}
\usepackage{nicefrac}
\usepackage{microtype}
\usepackage{xcolor}
\usepackage{graphicx}
\usepackage{algorithm}
\usepackage{algpseudocode}
\usepackage{bm}

\hypersetup{pdftitle={Counterfactual Rollout Replay: Forkable Environments as Free Process Rewards for Software Engineering Agents},pdfauthor={Yuanhao Li, Hongbo Wang, Xuhong Chen, Yiming Cao, Xunzhu Tang},colorlinks=true,linkcolor=blue,citecolor=blue,urlcolor=blue}

\newtheorem{proposition}{Proposition}

\title{Counterfactual Rollout Replay:\\Forkable Environments as Free Process Rewards\\for Software Engineering Agents}

\author{%
Yuanhao Li$^{1}$ \quad Hongbo Wang$^{1}$\thanks{Corresponding author: \texttt{hbwang@bupt.edu.cn}.} \quad Xuhong Chen$^{1}$\\
\textbf{Yiming Cao$^{1}$ \quad Xunzhu Tang$^{2}$}\\[3pt]
$^{1}$State Key Laboratory of Networking and Switching Technology\\
Beijing University of Posts and Telecommunications, Beijing 100876, China\\
$^{2}$University of Luxembourg, Luxembourg
}

\begin{document}

\maketitle

\input{sections/abstract}
\input{sections/introduction}
\input{sections/related-work}
\input{sections/related-work-additions}
\input{sections/preliminaries}
\input{sections/method}
\input{sections/experiments}
\input{sections/discussion}
\input{sections/conclusion}

\begin{ack}
This work received no third-party funding or support. The authors declare no competing interests.
\end{ack}

\clearpage
\label{page:references}
\bibliographystyle{plainnat}
\bibliography{refs}

\clearpage
\appendix
\input{sections/appendix-proofs}
\input{sections/appendix-algorithm}
\input{sections/appendix-snr}
\input{sections/appendix-hyperparams}
\input{sections/appendix-prompts}
\input{sections/appendix-determinism}
\input{sections/appendix-extra-ablations}
\input{sections/appendix-perrepo}

\input{sections/appendix-error-taxonomy}
\input{sections/appendix-cases}
\input{sections/appendix-cost-breakdown}
\input{sections/appendix-compute}

\input{sections/appendix-additional-discussion}
\input{sections/appendix-dataset}
\input{sections/appendix-reproducibility}
\input{sections/appendix-rebuttal}

\newpage
\input{sections/checklist}

\end{document}

%% file: sections/abstract.tex
\begin{abstract}
Outcome-only reinforcement learning gives software engineering (SWE) agents a terminal success signal but little direct guidance about intermediate decisions. We introduce Counterfactual Rollout Replay (CRR), a training-time procedure that uses forkable executable environments to obtain step-level return contrasts. CRR selects a small set of decision points, restores each state, samples an alternative action, and rolls the branch forward under the policy. It retains the realised training trajectory and replaces the advantage at selected steps with the difference between its terminal return and the sampled counterfactual return. The method needs no human process labels or learned process reward model; \emph{free} refers to those supervision costs, not replay compute. With a 14B policy, CRR improves pass@1 on SWE-bench Verified, SWE-bench Live, and SWE-rebench, and combines with process-reward and trajectory-search methods. On SWE-bench Verified, an equal-wall-clock comparison on the same hardware yields 41.7\,\% versus 36.7\,\% for extended outcome-only GRPO, a 5.0-point gain with fork overhead included. These results apply to environments with affordable, reliable state restoration; stochastic continuations and expensive or imperfect replay remain limitations.
\end{abstract}

%% file: sections/introduction.tex
\section{Introduction}
\label{sec:intro}

Software engineering agents solve repository-level issues through sequences of file inspection, code edits, shell commands, and tests. Benchmarks such as \textsc{SWE-bench}~\citep{jimenez2023swebench} and executable training environments such as \textsc{SWE-Gym}~\citep{pan2024swegym} and \textsc{SWE-rebench}~\citep{badertdinov2025swerebench,badertdinov2026swerebench2} make terminal-verifier reinforcement learning practical. PPO- and GRPO-style updates~\citep{schulman2017ppo,shao2024deepseekmath} can improve task success, but a trajectory-level advantage does not distinguish a useful localisation decision from a later edit that breaks the patch. Learned process rewards and search provide alternative sources of guidance~\citep{cui2025prime,han2026swetrace,peng2026hiper,djuhera2026tsr}. We investigate whether executable alternatives can instead provide a training-time credit signal.

Forkability is shared with games and simulators; it is not itself our contribution. Its useful feature in SWE is access to a restorable repository and executable environment with auditable provenance. With sandbox infrastructure~\citep{pan2024swegym,wang2024openhands,yuan2026minisandbox}, an alternative action can be followed through the real task environment and evaluated by its verifier, without learning a transition model. This opportunity depends on reliable restoration and an affordable replay cost.

\emph{Counterfactual Rollout Replay} (CRR) selects a small number of decision points from an on-policy trajectory, forks their states, and samples alternative actions and continuations. It retains the realised trajectory for training and uses the difference between realised and counterfactual terminal returns as the advantage at selected steps. Unlike selecting a verifier-best branch for training, the fork changes how the realised action is scored. \emph{Free} means that CRR needs no human process labels, learned process reward model, or oracle hindsight annotation; every replay still consumes compute. Environment determinism also leaves future policy-sampling variance intact.

Our contributions are:
\begin{itemize}
\item \textbf{Executable contrast estimation.} CRR supplies an environment-grounded selected-step signal and integrates it into the PPO/GRPO update while retaining the realised trajectory.
\item \textbf{Scoped analysis.} The estimator is unbiased for a proposal-dependent, action-conditional contrast at a fixed index. The implemented trajectory-level selector induces a selector-weighted objective, and the variance decomposition provides a diagnostic condition rather than a universal guarantee.
\item \textbf{Controlled evaluation.} With a shared 14B backbone, CRR improves pass@1 across \textsc{SWE-bench Verified}, \textsc{SWE-bench Live}, and \textsc{SWE-rebench}, and combines with PRM and trajectory-search training. A separate equal-wall-clock comparison on Verified retains a 5.0-point gain over extended vanilla GRPO with replay overhead charged. A matched Search-Select control tests the two uses of fork data under a reduced budget.
\end{itemize}

%% file: sections/related-work.tex
\section{Related Work}
\label{sec:related}

\paragraph{Repository-level SWE agents and benchmarks.} Tool-using agents such as ReAct established the reasoning-and-acting loop for language models~\citep{yao2023react}, and \textsc{SWE-bench} reframed this loop as repository-level GitHub issue resolution~\citep{jimenez2023swebench}. Subsequent systems improved the agent-computer interface, search pipeline, and multi-agent scaffold: \textsc{SWE-agent} tailored file editing and test execution to autonomous agents~\citep{yang2024sweagent}; \textsc{AutoCodeRover}, \textsc{MAGIS}, \textsc{LingmaAgent}, \textsc{HyperAgent}, and \textsc{OpenHands} developed increasingly capable repository-level scaffolds and platforms~\citep{zhang2024autocoderover,tao2024magis,ma2024lingma,phan2024hyperagent,wang2024openhands}. Benchmarking has also moved toward cleaner and fresher evaluation through \textsc{SWE-bench Verified}~\citep{openai2024swebenchverified}, \textsc{SWE-bench Live}~\citep{zhang2025swebenchlive}, \textsc{SWE-bench+}~\citep{aleithan2024swebenchplus}, multilingual or re-collected SWE tasks~\citep{badertdinov2025swerebench,badertdinov2026swerebench2,zan2024swebenchjava}, and adversarial or audit-style reassessments~\citep{wang2025solvedissues,yu2026sweabs}. Complementary microbenchmarks probe dependency-version constraint semantics with machine-checkable answers~\citep{chen2026semverbenchbenchmarkingllmcomprehension}. CRR is complementary to this line: it changes the training-time advantage estimator rather than the scaffold, benchmark, or verifier.

\paragraph{RLVR training for software engineering agents.} Recent SWE-agent training systems move beyond prompting by optimising code policies against executable feedback. \textsc{RLEF} grounds code policies in execution feedback~\citep{gehring2024rlef}, \textsc{SWE-Gym} provides thousands of executable Python tasks for training agents and verifiers~\citep{pan2024swegym}, and recent pipelines such as \textsc{SoRFT}, \textsc{SWE-Fixer}, \textsc{SWE-Lego}, \textsc{Murphy}, \textsc{SkyRL-Agent}, \textsc{SWE-Dev}, and \textsc{SWE-MiniSandbox} refine SFT, RL objectives, and infrastructure efficiency~\citep{ma2025sorft,xie2025swefixer,tao2026swelego,ekbote2025murphy,cao2025skyrl,du2025swedev,yuan2026minisandbox}. Fine-grained studies of supervised fine-tuning also identify cases of incomplete learning~\citep{xue2026supervised}. The underlying optimisation machinery follows PPO-style clipped policy gradients~\citep{schulman2017ppo} and GRPO-style group-relative advantages introduced for mathematical reasoning and scaled in RLVR systems~\citep{shao2024deepseekmath,guo2025deepseekr1}. These pipelines make training possible, but their reward signal remains sparse and trajectory-level; CRR targets this credit-assignment bottleneck.

\paragraph{Process rewards and multi-turn credit assignment.} One response to sparse terminal rewards is to train or infer dense process feedback. Process-supervised reward models~\citep{lightman2024verify}, implicit rewards such as PRIME~\citep{cui2025prime}, rubric-based SWE PRMs~\citep{han2026swetrace}, execution-free feedback models~\citep{shum2025swerm}, and self-evolved reasoning rewards~\citep{guan2025rstarmath} all provide denser supervision than a terminal pass bit. Structured auxiliary tasks provide another form of process supervision for multimodal decisions~\citep{liu-etal-2026-ips}, while generative reward modelling can allocate reasoning according to internal uncertainty~\citep{xue2026reason}. For LLM-based evaluation, \citet{feng2026secondorderresponselawsllm} distinguish within-prompt sampling noise from between-prompt variation in LLM judges and derive debiased estimators of quadratic prompt instability. A parallel line imposes structure on long-horizon trajectories through hierarchical agent RL~\citep{peng2026hiper,zhou2024archer}, hindsight or belief-update credit assignment~\citep{tan2026hindsight,auzina2026belief}, and trajectory-search rollouts~\citep{djuhera2026tsr}. Scoring-based selection of multilingual reasoning paths has also been used to construct fine-tuning data~\citep{weihua2026adamcot}. CRR differs mechanistically: it does not learn a process reward model, infer credit from text, or select the best branch. It executes a counterfactual branch in the real SWE environment and uses the observed return differential as a training-time advantage signal.

\paragraph{Counterfactual and causal credit assignment.} Counterfactual credit assignment has a pre-LLM history in RL. Future-conditional baselines separate skill from luck by learning hindsight-conditioned value functions~\citep{mesnard2021counterfactual}, while counterfactual data augmentation and causal approaches manufacture alternative transitions under structural assumptions~\citep{pitis2020coda,lu2020counterfactual}. Counterfactual reasoning has also been studied in multi-agent communication and information-theoretic credit assignment~\citep{vanneste2020counterfactual,arumugam2021infotheoretic}, and recent LLM-agent work adapts contextual counterfactual credit to multi-agent collaboration~\citep{chen2026contextual}. These works generally learn, model, or infer counterfactual outcomes; CRR exploits the special forkability of containerised SWE tasks to execute selected counterfactuals directly.

\paragraph{Forking, sandboxing, and inference-time search.} Execution feedback also supports LLM-driven optimisation of programmatic policies and other artefacts~\citep{kuang2025learning,nie2026understanding}. The environmental property we use is not forkability in the abstract: games and simulators may also restore state, but SWE repositories can be restored cheaply, exactly, and auditably from a commit and executable environment. Containerised SWE infrastructures such as \textsc{SWE-Gym}, \textsc{OpenHands}, and \textsc{SWE-MiniSandbox} make this practical during training~\citep{pan2024swegym,wang2024openhands,yuan2026minisandbox}, and CRR can be viewed as a Dyna-style use of the real environment as the model rather than a learned simulator~\citep{sutton1991dyna}. Inference-time tree search and test-time scaling for coding agents also branch environments, but they use branches to choose better completions or rollout continuations~\citep{koh2024treesearch,kim2026agenticcoding,zhu2025agenticscaling,barnes2026surprisal,djuhera2026tsr}. The interplay between search algorithms and reward design has also been surveyed~\citep{wei2025unifying}. CRR uses forks differently: a branch is replayed to estimate what the realised action contributed relative to a plausible alternative, not to replace the behaviour policy's chosen action.

%% file: sections/related-work-additions.tex
\paragraph{Tree-based training and uncertainty-guided branching.}
Branching also supports training-time credit assignment: Tree-GRPO shares rollout prefixes and combines intra-tree and inter-tree group-relative advantages derived from outcome rewards~\citep{ji2025treegrpo}.
This is distinct from inference-time best-first search for selecting action trajectories~\citep{koh2024treesearch} and from Entropy-Guided Branching, which replays alternative actions at high-entropy steps to repair failed tool-use trajectories~\citep{wei2026egb}.
CRR instead retains the realised on-policy trajectory and uses environment-executed alternative-action contrasts to score selected decisions during training.
Its contribution is therefore this credit-assignment construction in forkable SWE environments, rather than branching or entropy-based selection alone.

%% file: sections/preliminaries.tex
\section{Preliminaries}
\label{sec:prelim}

We model issue resolution as a finite-horizon Markov decision process $\mathcal{M}=(\mathcal{S},\mathcal{A},P,R,H)$. State $s_t$ includes the repository working tree, issue, conversation history, and tool observations. Actions are structured tool calls for file inspection, editing, shell execution, and testing~\citep{yang2024sweagent}; executing a call in the sandbox induces $P(s_{t+1}\mid s_t,a_t)$~\citep{pan2024swegym,yuan2026minisandbox}. We assume deterministic transitions for the admitted training tasks and describe the consistency filter and residual nondeterminism in Section~\ref{sec:appendix-determinism}. The policy $\pi_\theta$ samples a trajectory $\tau=(s_0,a_0,\ldots,s_T,a_T)$ with $T\le H$ (default $H=50$). Its terminal reward $R(\tau)\in\{0,1\}$ records whether the final patch passes the held-out tests.

For expected return $J(\theta)=\mathbb{E}_{\tau\sim\pi_\theta}[R(\tau)]$ and an action-independent baseline $b$, the policy-gradient identity is
\begin{equation}
\nabla_\theta J(\theta)=\mathbb{E}_{\tau\sim\pi_\theta}\left[\sum_{t=0}^{T}(R(\tau)-b)\nabla_\theta\log\pi_\theta(a_t\mid s_t)\right].
\label{eq:pg}
\end{equation}
For the GRPO comparison~\citep{shao2024deepseekmath,guo2025deepseekr1}, we write the centred outcome advantage as $\hat{A}^{\mathrm{GRPO}}_t=R(\tau)-\bar{R}^{(G)}$, where $\bar{R}^{(G)}$ is the mean return of $G$ sibling trajectories from the same initial state. This scalar is shared across the realised trajectory. We do not assert that substituting a group mean that includes the realised return into~\eqref{eq:pg} preserves an unbiased gradient estimator.

A sandbox is \emph{forkable at horizon $h$} if it supports a snapshot $\sigma:\mathcal{S}\to\Sigma$ and restore $\rho:\Sigma\to\mathcal{S}$ such that executing any fixed sequence of $h$ actions from $\rho(\sigma(s_t))$ has the same trajectory distribution as executing it from $s_t$. In the deterministic case, the resulting states are identical. Games and simulators can also satisfy this definition. The SWE setting provides a restorable executable repository state; whether snapshotting and replay are cheap enough to help training is an empirical resource question (Section~\ref{sec:exp-compute}).

A counterfactual trajectory $\tilde{\tau}_t$ shares $\tau$'s prefix through $s_t$, samples $\tilde{a}_t\sim q_{\neg a_t}(\cdot\mid s_t)$ with $\tilde{a}_t\ne a_t$, and then continues with independently sampled actions from $\pi_\theta$. When the policy assigns near-unit probability to one action, we skip that index rather than force an out-of-support alternative. Even with deterministic transitions, the sampled continuations and terminal returns remain random.

%% file: sections/method.tex
\section{Counterfactual Rollout Replay}
\label{sec:method}

\subsection{The CRR Advantage Estimator}
\label{sec:method-estimator}

Given an on-policy trajectory $\tau$, the selector chooses decision indices $\mathcal{I}(\tau)\subseteq\{0,\ldots,T\}$. For each $t\in\mathcal{I}(\tau)$, CRR restores $s_t$, samples $\tilde{a}_t\sim q_{\neg a_t}(\cdot\mid s_t)$, and completes an independently sampled on-policy continuation. The selected-step advantage is
\begin{equation}
\hat{A}^{\text{CRR}}_t=R(\tau)-R(\tilde{\tau}_t).
\label{eq:crr}
\end{equation}
CRR retains the realised trajectory rather than replacing it with a better branch. At unselected steps it uses the outcome-only advantage, giving
\begin{equation}
\hat{A}^{\star}_t=
\begin{cases}
R(\tau)-R(\tilde{\tau}_t) & t\in\mathcal{I}(\tau),\\
R(\tau)-\bar{R}^{(G)} & \text{otherwise}.
\end{cases}
\label{eq:hybrid}
\end{equation}
We insert this hybrid advantage into the clipped surrogate:
\begin{equation}
\mathcal{L}_{\text{CRR}}(\theta)=
\mathbb{E}_{\tau\sim\pi_\theta}\!\left[
\sum_{t=0}^{T}\min\!\left(
r_t(\theta)\hat{A}^{\star}_t,\,
\mathrm{clip}(r_t(\theta),1-\epsilon,1+\epsilon)\hat{A}^{\star}_t
\right)\right]
-\beta D_{\mathrm{KL}}(\pi_\theta\|\pi_{\text{ref}}),
\label{eq:loss}
\end{equation}
where $r_t(\theta)=\pi_\theta(a_t\mid s_t)/\pi_{\theta_{\text{old}}}(a_t\mid s_t)$ and $\beta$ is the KL coefficient. This retains the PPO/GRPO update structure~\citep{schulman2017ppo,shao2024deepseekmath}, without implying a new stability or convergence guarantee. The fixed-index contrast below is unbiased, but selecting indices from an entire realised trajectory induces data-dependent weighting. The implemented objective is selector-weighted, not an unbiased estimator of the unweighted per-step GRPO objective.

\begin{figure}[t]
\centering
\includegraphics[width=0.78\linewidth]{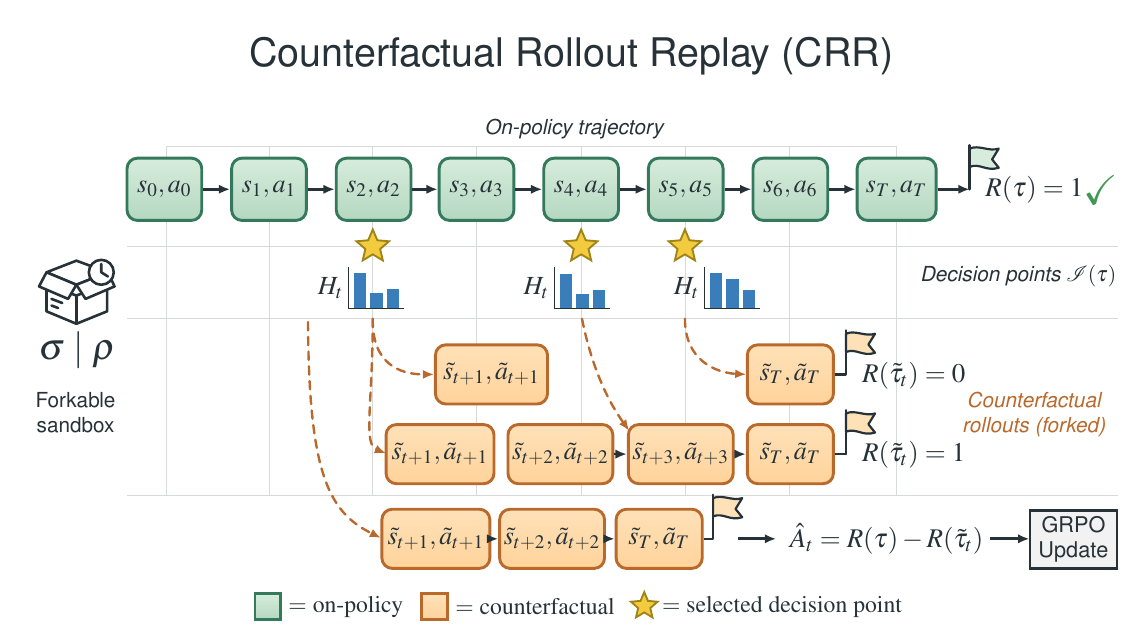}
\caption{CRR retains the realised trajectory (green), forks selected decision points, and samples alternative continuations (orange). Their terminal-return differences replace outcome-only advantages at selected indices; the branches are not substituted for the realised training trajectory.}
\label{fig:overview}
\end{figure}

\subsection{Properties}
\label{sec:method-properties}

Proofs appear in Section~\ref{sec:appendix-proofs}.
\begin{proposition}[Fixed-index action-conditional contrast]
\label{prop:unbiased}
Assume a deterministic sandbox forkable at horizon $H-t$ and a proposal $q_{\neg a_t}(\cdot\mid s_t)$ supported on alternative actions $\mathcal{A}\setminus\{a_t\}$. For a fixed index $t$ and action-augmented history $h_t^+=(s_0,a_0,\ldots,s_t,a_t)$,
\[
\mathbb{E}\!\left[\hat{A}^{\text{CRR}}_t\mid h_t^+\right]
=Q^{\pi_\theta}(s_t,a_t)
-\mathbb{E}_{\tilde{a}_t\sim q_{\neg a_t}}\!\left[Q^{\pi_\theta}(s_t,\tilde{a}_t)\right].
\]
The expectation includes the counterfactual action and future policy-sampling randomness in both continuations. This is an action-conditional contrast for the specified proposal. Only the idealised variant allowing $\tilde{a}_t=a_t$ with $q=\pi_\theta$ recovers the classical advantage by reducing the second term to $V^{\pi_\theta}(s_t)$.
\end{proposition}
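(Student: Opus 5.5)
The plan is to split the expectation by linearity into a realised-trajectory term and a counterfactual term, condition each on $h_t^+$, and identify each with a $Q$-value via the tower property. Conditioning is essential because $R(\tau)$ and $R(\tilde\tau_t)$ both depend on the shared prefix through $s_t$. Write
\[
\mathbb{E}\bigl[\hat{A}^{\text{CRR}}_t\mid h_t^+\bigr]=\mathbb{E}\bigl[R(\tau)\mid h_t^+\bigr]-\mathbb{E}\bigl[R(\tilde\tau_t)\mid h_t^+\bigr],
\]
and treat the two terms separately.

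For the first term, $\tau$ is an on-policy trajectory. Given $h_t^+$, the remainder $(s_{t+1},a_{t+1},\ldots)$ is generated by $P$ and $\pi_\theta$ from $(s_t,a_t)$. By the Markov property, the conditional law of the suffix depends on $h_t^+$ only through $(s_t,a_t)$. Since the reward is terminal and binary, this conditional expectation is by definition $Q^{\pi_\theta}(s_t,a_t)$, with the finite horizon $H$ implicit in $Q$.

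For the second term, the argument proceeds in three steps.
\begin{enumerate}
\item Condition further on the proposed action $\tilde a_t$. Its conditional law given $h_t^+$ is $q_{\neg a_t}(\cdot\mid s_t)$, which depends only on $s_t$ and the realised $a_t$.
\item Invoke forkability at horizon $H-t$. The branch starts from $\rho(\sigma(s_t))$, which in the deterministic case is identical to $s_t$. Executing $\tilde a_t$ followed by independently sampled $\pi_\theta$ actions therefore induces the same suffix distribution as acting from $s_t$ itself. Hence $\mathbb{E}[R(\tilde\tau_t)\mid h_t^+,\tilde a_t]=Q^{\pi_\theta}(s_t,\tilde a_t)$.
\item Average over $\tilde a_t\sim q_{\neg a_t}$ by the tower property to obtain the claimed second term.
\end{enumerate}
No independence between the realised and counterfactual continuations is needed for this expectation, since only linearity is used. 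Independence would matter only for variance.

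The main obstacle is justifying the forkability step. The definition guarantees equal trajectory distributions only for \emph{fixed} action sequences, while the counterfactual continuation consists of adaptively sampled policy actions. To close this gap, I would condition on the sequence of policy randomness, or argue by induction over steps. With deterministic transitions the restored state is literally $s_t$, so each step's action distribution $\pi_\theta(\cdot\mid s_k)$ coincides on both sides, and the joint laws agree step by step up to horizon $H-t$. A secondary care point is that the proposal's support excludes $a_t$. This is harmless for the identity, because each $Q^{\pi_\theta}(s_t,\tilde a_t)$ is well defined for any action in $\mathcal{A}$.

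Finally, for the idealised remark, replace $q_{\neg a_t}$ by $\pi_\theta(\cdot\mid s_t)$ and allow $\tilde a_t=a_t$. The second term becomes $\sum_a\pi_\theta(a\mid s_t)Q^{\pi_\theta}(s_t,a)=V^{\pi_\theta}(s_t)$, which gives the classical advantage. With the exclusion $\tilde a_t\ne a_t$ it does not reduce to $V^{\pi_\theta}(s_t)$ in general. This contrast is what the statement's caveat records.
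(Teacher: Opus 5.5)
Your proposal is correct and follows essentially the same route as the paper's proof, which likewise identifies the expected terminal return after taking an action at $s_t$ and continuing with $\pi_\theta$ as a $Q$-value, and then averages over $\tilde{a}_t\sim q_{\neg a_t}$ and the continuation randomness. The paper compresses into the word ``similarly'' the step you make explicit: the forkability definition covers fixed action sequences, while the branch is driven adaptively by $\pi_\theta$, and you resolve this step by step using determinism.
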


\begin{proposition}[Variance decomposition against an outcome baseline]
\label{prop:variance}
Write $X=R(\tau)$, $Y=R(\tilde{\tau}_t)$, and let $\bar{Y}=\bar{R}^{(G)}$ be the sibling-group baseline. For any conditioning variable or bucket $B$ with finite second moments,
\begin{align*}
&\mathrm{Var}[X-Y\mid B]-\mathrm{Var}[X-\bar{Y}\mid B]\\
&\quad=\mathrm{Var}[Y\mid B]-\mathrm{Var}[\bar{Y}\mid B]
-2\mathrm{Cov}(X,Y\mid B)+2\mathrm{Cov}(X,\bar{Y}\mid B).
\end{align*}
CRR therefore has lower conditional variance than the outcome baseline in buckets satisfying
\[
\mathrm{Cov}(X,Y\mid B)>
\frac12\bigl(\mathrm{Var}[Y\mid B]-\mathrm{Var}[\bar{Y}\mid B]\bigr)
+\mathrm{Cov}(X,\bar{Y}\mid B).
\]
\end{proposition}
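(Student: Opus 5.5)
The identity follows from expanding both conditional variances with bilinearity of the conditional covariance, and the threshold condition follows by rearranging it. We work throughout with the conditional law given $B$. The finite-second-moment assumption makes every variance and covariance well defined and lets us split them term by term. The first step is to expand
\[
\mathrm{Var}[X-Y\mid B]=\mathrm{Var}[X\mid B]+\mathrm{Var}[Y\mid B]-2\mathrm{Cov}(X,Y\mid B),
\]
and, by the same argument with $\bar{Y}$ in place of $Y$,
\[
\mathrm{Var}[X-\bar{Y}\mid B]=\mathrm{Var}[X\mid B]+\mathrm{Var}[\bar{Y}\mid B]-2\mathrm{Cov}(X,\bar{Y}\mid B).
\]
Subtracting the second from the first cancels the common term $\mathrm{Var}[X\mid B]$. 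What remains is exactly the stated difference.

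The second step derives the sufficient condition. CRR has strictly lower conditional variance exactly when the difference above is negative, that is, when
\[
\mathrm{Var}[Y\mid B]-\mathrm{Var}[\bar{Y}\mid B]-2\mathrm{Cov}(X,Y\mid B)+2\mathrm{Cov}(X,\bar{Y}\mid B)<0.
\]
Moving $2\mathrm{Cov}(X,Y\mid B)$ to the right-hand side and dividing by $2$ gives the displayed threshold. Because this is an equivalence, the stated condition is both necessary and sufficient for the strict variance reduction within the bucket.

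There is no real technical obstacle here. The only point needing care is interpretive. The statement holds for an arbitrary joint law of $(X,Y,\bar{Y})$ given $B$, so we should not impose independence between $Y$ and $X$. Nor should we impose independence between $\bar{Y}$ and $X$, even though the group mean may include the realised return. The proof must not silently use the forkability or determinism assumptions of Proposition~\ref{prop:unbiased}. Those assumptions only shape the sign and size of $\mathrm{Cov}(X,Y\mid B)$, for example through the shared prefix, and they are what make the condition plausible in practice. They are not needed for the algebra. We would add a remark to this effect, consistent with the paper's framing of the condition as a diagnostic rather than a guarantee.
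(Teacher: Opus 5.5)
Your proposal is correct and follows the paper's proof exactly: expand both conditional variances, cancel the common $\mathrm{Var}[X\mid B]$ term, subtract, and rearrange the sign condition into the covariance threshold. You also observe that the rearrangement is an equivalence, so the condition is necessary as well as sufficient; this is correct and slightly sharper than the paper, which calls it only a sufficient condition.
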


The decomposition gives a diagnostic condition, not universal variance reduction. Positive return covariance alone is insufficient. The task/prefix-bucket measurements and advantage-SNR diagnostic in Section~\ref{sec:appendix-snr} do not independently establish every term in this inequality.

\subsection{Decision-Point Selector}
\label{sec:method-selector}

We select at most $k$ non-terminal decisions per trajectory (default $k=4$). We approximate $H_t$ by mean grammar-masked next-token entropy over semantic tokens of the emitted tool call. The tool-type score is 0 for passive read/search, 0.25 for focused tests, 0.75 for shell commands, and 1.0 for file edits; terminal submit actions are excluded. Define
\begin{equation}
\mathrm{score}(t)=H_t\bigl(1+\lambda_{\text{type}}\mathrm{type}(a_t)\bigr).
\label{eq:selector}
\end{equation}
The selector takes the top-$k$ scores among steps with $H_t>\eta$, using fewer forks if fewer steps qualify. It combines uncertainty with a preference for state-changing actions. For ablations, Random samples up to $k$ indices uniformly without replacement from the same eligible set; Stride selects approximately evenly spaced time-ordered quantiles of that set. All use the same proposal and fork cap.

\subsection{Counterfactual Proposal}
\label{sec:method-proposal}

We use a tempered policy proposal excluding the realised action:
\[
q_{\neg a_t}(a\mid s_t)\propto
\pi_\theta(a\mid s_t)^{1/T_{\text{cf}}}\mathbf{1}[a\ne a_t],
\qquad T_{\text{cf}}>0.
\]
The default $T_{\text{cf}}=0.7$ balances alternative-action diversity against plausible continuations. The reported sweep is broadly insensitive over $[0.5,1.0]$. The zero-contrast control displayed at $T_{\text{cf}}=0$ is a distinct control outside this positive-temperature proposal family; it is not its zero-temperature limit.

\subsection{Implementation}
\label{sec:method-implementation}

The implementation uses \textsc{SkyRL-Agent}'s asynchronous rollout framework~\citep{cao2025skyrl} and \textsc{SWE-MiniSandbox}~\citep{yuan2026minisandbox}. Algorithm~\ref{alg:crr} gives the training loop. On-policy generation and counterfactual replay run in logical worker pools on the same node, synchronising before the update. The default uses $G=8$ sibling trajectories and up to $k=4$ selected steps per trajectory, with one counterfactual continuation per selected step. Section~\ref{sec:exp-compute} separates trajectory counts from total device-time and reports the equal-wall-clock comparison with all fork overhead charged.

%% file: sections/experiments.tex
\section{Experiments}
\label{sec:experiments}

\subsection{Experimental Setup}
\label{sec:exp-setup}

\paragraph{Model and infrastructure.}
We train the open-weight \textsc{Qwen3-14B} backbone~\citep{yang2025qwen3} with rank-128 LoRA adapters, using \textsc{SkyRL-Agent}~\citep{cao2025skyrl} for asynchronous dispatch and \textsc{SWE-MiniSandbox}~\citep{yuan2026minisandbox} for execution. Training uses one node with eight H100 80GB GPUs; fork workers are logical roles whose device-time is charged to this node. The default setting uses $k=4$ selected steps, one counterfactual per step, $T_{\mathrm{cf}}=0.7$, and learning rate $5\times10^{-6}$. Full-budget runs collect 24,000 on-policy trajectories and perform 4,000 optimizer updates. Sections~\ref{sec:appendix-hyperparams} and~\ref{sec:appendix-prompts} provide hyperparameters, prompts, and tools.

\paragraph{Data and evaluation.}
The post-filter training corpus contains 3,638 deterministic instances: 2,438 admitted \textsc{SWE-Gym} tasks~\citep{pan2024swegym} and 1,200 admitted \textsc{SWE-rebench} tasks~\citep{badertdinov2025swerebench}. These counts follow the three-rerun filter (Section~\ref{sec:appendix-determinism}). The \textsc{SWE-rebench} training/evaluation split precedes filtering and excludes overlap by task identity, issue/PR URL, collection month, repository commit window, and exact or near-duplicate issue/patch content. We evaluate greedy pass@1 with a 50-turn cap on \textsc{SWE-bench Verified}~\citep{openai2024swebenchverified}, \textsc{SWE-bench Live}~\citep{zhang2025swebenchlive}, and held-out \textsc{SWE-rebench} Python tasks. Main results are means and standard deviations across three random seeds.

\paragraph{Baselines and resource control.}
Our same-backbone, same-scaffold reproductions cover zero-shot agents~\citep{yang2024sweagent,wang2024openhands}, SFT~\citep{pan2024swegym,xie2025swefixer}, outcome-only and multi-turn RL~\citep{gehring2024rlef,ekbote2025murphy,cao2025skyrl}, process supervision~\citep{han2026swetrace}, and trajectory search~\citep{djuhera2026tsr}. Matching on-policy trajectories does not match total compute. For the central 24,000-trajectory comparison, GRPO, \textsc{SWE-Trace}, and CRR consume 188, 233, and 321 H100-hours, respectively (Table~\ref{tab:main-cost}). Table~\ref{tab:main} therefore reports endpoint performance, while the separate matched-wall-clock experiment measures resource efficiency. We do not infer equal-compute gains on all three benchmarks from the endpoint table.

\begin{table}[t]
\centering\small
\caption{Full-budget resource totals for the three central training comparisons. Device-time includes on-policy rollout, counterfactual replay or PRM scoring, and learner updates. These settings do not have equal total compute. Wall-clock values are rounded to one decimal place.}
\label{tab:main-cost}
\begin{tabular}{lrrr}
\toprule & GRPO & SWE-Trace & CRR\\
\midrule Device-time (H100-hours) & 188 & 233 & 321\\
Node wall-clock (hours) & 23.5 & 29.1 & 40.1\\
Verified pass@1 (\%) & $36.4\pm0.8$ & $39.1\pm0.7$ & $41.7\pm0.6$\\
\bottomrule
\end{tabular}
\end{table}

\subsection{Main Results}
\label{sec:exp-main}

CRR reaches 41.7\%, 35.9\%, and 32.6\% on Verified, Live, and SWE-rebench (Table~\ref{tab:main}), exceeding vanilla GRPO by 5.3, 4.8, and 4.7 percentage points. The corresponding differences from \textsc{SWE-Trace} are 2.6, 2.2, and 2.4 points. CRR also improves over TSR+GRPO, while combining CRR with TSR or a PRM raises performance further. The 44.9\% Verified result uses CRR plus \textsc{SWE-Trace}; the full CRR+TSR+PRM combination reaches 45.6\% (Table~\ref{tab:composition}).

\begin{table}[t]
\centering
\caption{Pass@1 (\%) with the same 14B backbone and scaffold. Values are means $\pm$ seed standard deviations over three seeds. Bold denotes the best non-stacked result; underlining denotes the best result within this table. The final column is the Verified difference from vanilla GRPO. These endpoint runs are not uniformly compute-matched.}
\label{tab:main}
\small
\setlength{\tabcolsep}{4pt}
\begin{tabular}{lcccc}
\toprule
Method & SWE-bench Verified & SWE-bench Live & SWE-rebench & $\Delta$ vs.\ GRPO\\
\midrule
\textsc{SWE-agent} (zero-shot)        & $22.6\pm 1.4$ & $18.9\pm 1.5$ & $16.4\pm 1.6$ & $-13.8$\\
\textsc{OpenHands} (zero-shot)        & $23.4\pm 1.4$ & $19.7\pm 1.5$ & $17.0\pm 1.6$ & $-13.0$\\
\textsc{SWE-Gym} SFT                  & $28.1\pm 1.2$ & $24.0\pm 1.3$ & $21.5\pm 1.4$ & $-8.3$\\
\textsc{SWE-Fixer}                    & $30.7\pm 1.1$ & $25.8\pm 1.3$ & $22.8\pm 1.4$ & $-5.7$\\
\textsc{RLEF} (outcome-only PPO)      & $33.5\pm 1.0$ & $28.6\pm 1.2$ & $25.4\pm 1.3$ & $-2.9$\\
\textsc{Murphy}                       & $35.2\pm 0.9$ & $30.0\pm 1.1$ & $26.7\pm 1.2$ & $-1.2$\\
\textsc{SkyRL-Agent} (vanilla GRPO)   & $36.4\pm 0.8$ & $31.1\pm 1.1$ & $27.9\pm 1.2$ & $\phantom{+}0.0$\\
\textsc{SWE-Trace} (rubric PRM)       & $39.1\pm 0.7$ & $33.7\pm 1.0$ & $30.2\pm 1.1$ & $+2.7$\\
TSR + GRPO                            & $38.0\pm 0.8$ & $32.5\pm 1.0$ & $29.1\pm 1.1$ & $+1.6$\\
\midrule
CRR (ours)                            & $\bm{41.7\pm 0.6}$ & $\bm{35.9\pm 0.9}$ & $\bm{32.6\pm 1.0}$ & $+5.3$\\
CRR + TSR                             & $43.6\pm 0.5$ & $37.6\pm 0.8$ & $34.1\pm 0.9$ & $+7.2$\\
CRR + \textsc{SWE-Trace}              & $\underline{44.9\pm 0.4}$ & $\underline{38.8\pm 0.8}$ & $\underline{35.4\pm 0.9}$ & $+8.5$\\
\bottomrule
\end{tabular}
\end{table}

\subsection{Sample Efficiency and Fully Charged Training Time}
\label{sec:exp-sample}
\label{sec:exp-compute}

Figure~\ref{fig:sample-efficiency} separates two resource axes. On the left, CRR matches or exceeds the full-budget GRPO pass-rate within 6,000 on-policy trajectories, compared with 24,000 for GRPO. This is a conservative fourfold reduction in the counted on-policy trajectories; counterfactual suffixes are excluded, so this is not a fourfold compute saving.

The full accounting charges 164 H100-hours for on-policy rollout, 133 for the counterfactual pool, and 24 for learner updates: 321 H100-hours, or 40.125 hours on eight H100 GPUs (Table~\ref{supp-accounting}). Overlap between fork workers and generation is not deducted from device-time. The right panel compares continuing runs on identical hardware at equal elapsed training time. At 40.125 hours, CRR reaches 41.7\% versus extended GRPO's 36.7\%, a 5.0-point advantage. GRPO has then completed 37,824 trajectories and 6,304 updates, compared with CRR's 24,000 and 4,000. Timing includes forks, tests, waits, retries, optimization, and checkpoints; the complete ledger and timing exclusions appear in Table~\ref{supp-wallclock}.

\begin{figure}[t]
\centering
\includegraphics[width=0.49\linewidth]{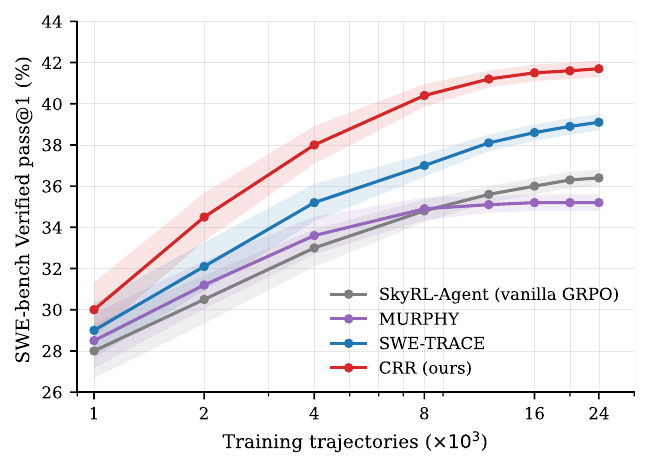}\hfill
\includegraphics[width=0.49\linewidth]{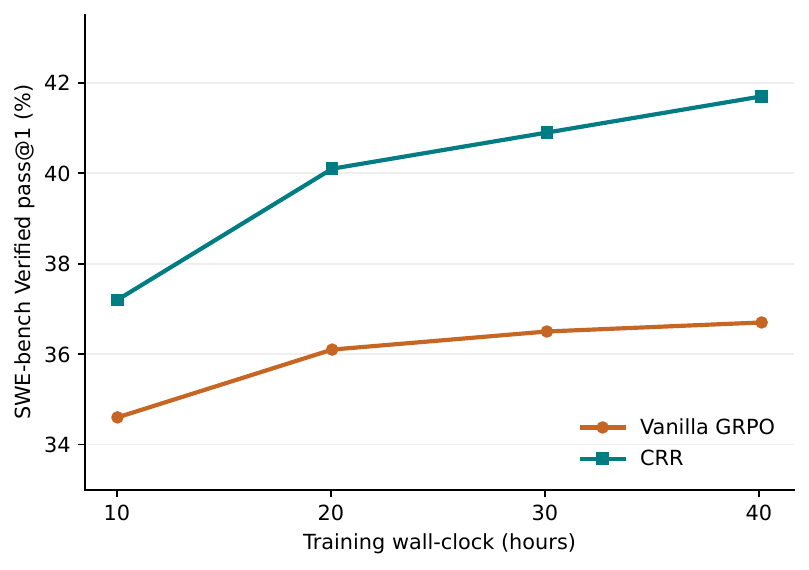}
\caption{Verified performance against two distinct budgets. Left: newly collected on-policy trajectories, excluding counterfactual suffixes. Right: elapsed training time on the same $8\times$H100 node with replay overhead fully charged. Right-panel checkpoints are observed three-seed means; per-checkpoint uncertainty bands are not available.}
\label{fig:sample-efficiency}
\end{figure}

\paragraph{Operating range and signal diagnostics.}
A reduced-budget sweep allocates 48 H100-hours per arm and evaluates all 500 Verified instances with one seed. Multiplying snapshot/restore/replay cost by $m=1,2,4$ gives CRR 36.2, 34.8, and 33.2\%, versus GRPO's 33.4\%. The observed advantage disappears between $m=2$ and $m=4$; this is a measured interval, not a general threshold. Advantage-SNR and task/bucket covariance diagnostics are reported in Section~\ref{sec:appendix-snr} and Figure~\ref{fig:snr}. They characterize the measured signals without proving universal variance reduction or per-step causality.

\subsection{Credit Assignment versus Search-Select}
\label{sec:exp-search-select}

To distinguish credit assignment from selecting better branches, we compare CRR with a Search-Select-GRPO control. Both use the same selector, $k=4$ forks per realized trajectory, verifier, horizon, group size, and clipped learner. Search-Select replaces a realized training sequence only when a fork has strictly higher verifier return; ties retain the realized sequence. It then applies ordinary GRPO, discarding unused branches. CRR retains the realized sequence and uses the return differential for credit assignment. Under equal device-time, CRR is higher in all three paired seeds (Table~\ref{tab:search-select}), supporting this mechanism distinction within the tested protocol. Section~\ref{sec:supplementary-evaluation} gives the branch-selection and log-probability accounting.

\begin{table}[t]
\centering
\small
\caption{Two uses of matched fork executions. Every arm receives 48 H100-hours per seed; greedy evaluation covers all 500 Verified instances. Values are means $\pm$ seed standard deviations over three seeds.}
\label{tab:search-select}
\begin{tabular}{lcr}
\toprule
Method & Fork use & Pass@1 (\%)\\
\midrule
Vanilla GRPO & None & $33.3\pm0.7$\\
Search-Select-GRPO & Sequence selection & $34.4\pm0.8$\\
CRR & Credit assignment & $36.2\pm0.6$\\
\bottomrule
\end{tabular}
\end{table}

\subsection{Ablations and Composition}
\label{sec:exp-ablations}

Figure~\ref{fig:ablations} examines four design choices. Increasing $k$ from zero to one gains 2.2 points; $k=4$ recovers 5.3 of the 5.6-point gain at $k=16$, with additional wall-clock cost. The entropy-plus-tool-type selector improves over either component alone. For truncated horizons $h\in\{2,4,8\}$, the verifier evaluates the fork's current working tree; full-horizon rollouts continue to submission or the 50-turn cap. Performance is relatively stable for $T_{\mathrm{cf}}\in[0.5,1.0]$ and degrades at higher temperatures. The point plotted at zero is a separate zero-contrast control outside the positive-temperature proposal family. Section~\ref{sec:appendix-extra-ablations} covers concurrency and the PPO outer loop.

\begin{figure}[t]
\centering
\includegraphics[width=0.95\linewidth]{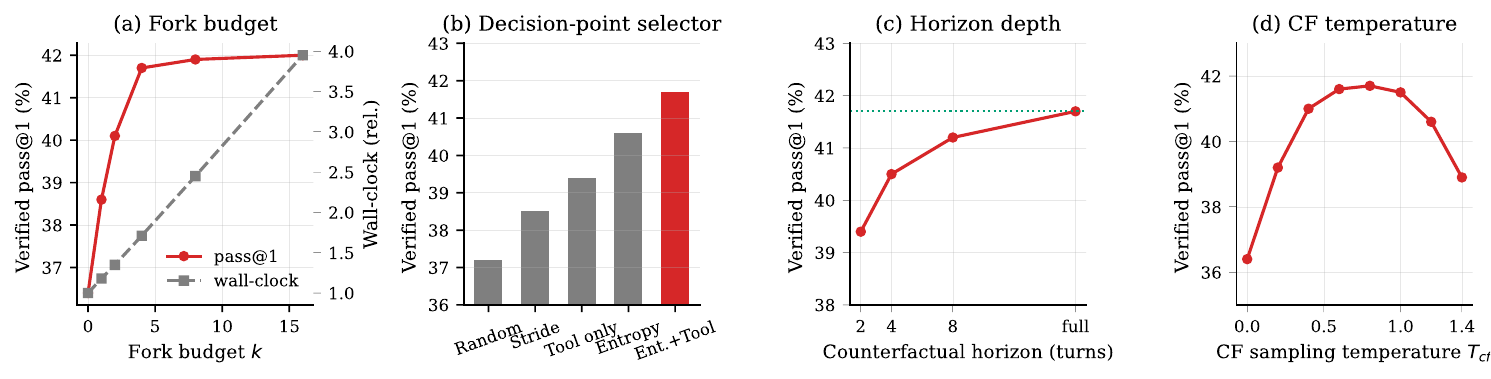}
\caption{Verified ablations: (a) fork budget and its relative wall-clock cost, not a matched-time comparison; (b) selector; (c) counterfactual horizon; (d) proposal temperature, with a separate zero-contrast control plotted at zero.}
\label{fig:ablations}
\end{figure}

Table~\ref{tab:composition} shows gains when CRR is combined with TSR and PRM supervision. Adding CRR raises the four configurations by 5.3, 5.6, 5.8, and 5.6 points, respectively. The full stack reaches 45.6\%. These endpoint comparisons demonstrate compatible mechanisms; they do not establish a significance claim or equal-compute advantage for the stacked systems.

\begin{table}[t]
\centering
\caption{Composition on Verified, measured by pass@1 (\%). TSR, PRM, and their combination are added to the base GRPO configuration. The 44.9\% result uses CRR+PRM; 45.6\% uses CRR+TSR+PRM.}
\label{tab:composition}
\small
\begin{tabular}{lcccc}
\toprule
Configuration & none & TSR & PRM & TSR+PRM\\
\midrule
Base GRPO & $36.4$ & $38.0$ & $39.1$ & $40.0$\\
Base GRPO + CRR & $\bm{41.7}$ & $\bm{43.6}$ & $\bm{44.9}$ & $\underline{45.6}$\\
\bottomrule
\end{tabular}
\end{table}

\subsection{Sampling Consistency and Counterfactual Quality}
\label{sec:exp-robustness}
The gain persists beyond greedy evaluation. Table~\ref{tab:sampling-main} evaluates eight sampled completions per instance at temperature 0.6 using the three full-budget checkpoints. CRR improves pass@8 by 3.9 points over GRPO and all-8 success by 5.2 points. All-8 is the fraction of instances solved by every sampled completion: it measures consistency across samples, not the causal quality of individual actions. The full pass@1/2/4/8 results are in Section~\ref{sec:supplementary-evaluation}.

\begin{table}[t]
\centering\small
\caption{Sampling evaluation on all 500 Verified instances using eight completions per instance ($T=0.6$). Reported percentages are averaged over three full-budget training seeds. Greedy pass@1 is included for context.}
\label{tab:sampling-main}
\begin{tabular}{lrrrr}
\toprule Method & Greedy pass@1 & Sampled pass@1 & pass@8 & all-8\\
\midrule Vanilla GRPO & 36.4 & 35.9 & 53.8 & 18.7\\
SWE-Trace & 39.1 & 38.6 & 55.6 & 21.2\\
CRR & 41.7 & 41.2 & 57.7 & 23.9\\
\bottomrule
\end{tabular}
\end{table}

The appendix further tests how the fork budget is allocated and whether counterfactual quality explains the gain. Among 20,000 logged forks, 12\% use an immediately invalid alternative; this class contributes 8\% of a class-mean plug-in contrast diagnostic. This diagnostic is not a decomposition of the total training gradient. A validity-filtered proposal changes pass@1 from 36.2\% to 36.4\% in the reduced-budget, single-seed comparison, which does not establish equivalence. Selector-alignment probes find enrichment for patch-relevant edits and large sampled contrasts, without identifying per-step causality. Controlled replay-noise experiments show gradual degradation in the tested setting; averaging multiple counterfactuals does not generally remove systematic replay bias. Severe trajectory-length heterogeneity remains unresolved. Section~\ref{sec:supplementary-evaluation} gives the full protocols and diagnostics.

%% file: sections/discussion.tex
\section{Discussion}
\label{sec:discussion}

\paragraph{Limitations.} CRR requires affordable, reliable forks and a trustworthy terminal verifier. Expensive restoration can erase its resource-efficiency advantage, and the reported operating points do not isolate severe trajectory-length skew. The reduced-budget noise experiments test a controlled injection protocol; averaging multiple continuations does not remove systematic replay bias. Deterministic transitions also leave future policy-sampling variance intact. Our unbiasedness result concerns a fixed-index contrast, whereas the implemented selector induces a weighted training objective without an established convergence guarantee. The selector-alignment diagnostics provide decision-relevance proxies, not proof of per-step causality. Additional boundary cases and cost diagnostics appear in Section~\ref{sec:appendix-additional-discussion}.

\paragraph{Broader impact.} Better SWE agents can support software maintenance, while executable counterfactual traces make their training signals more inspectable. The same capabilities can accelerate malicious software development, and verifier-coupled training can exploit incomplete tests or harnesses. CRR does not resolve these risks. We recommend adversarial test-suite hardening~\citep{yu2026sweabs}, execution-free corroboration such as \textsc{SWE-RM}~\citep{shum2025swerm}, and documentation of replay traces and verifier limitations when releasing training artefacts.

%% file: sections/conclusion.tex
\paragraph{Conclusion.} CRR uses executable counterfactual continuations to score selected actions while retaining the realised training trajectory. It requires neither process labels nor a learned reward model, but charges the additional replay compute. The reported experiments show improved pass@1 across three SWE benchmarks and an equal-wall-clock gain over vanilla GRPO on Verified. Its practical value depends on affordable, reliable forkability; understanding replay noise, selector weighting, and heterogeneous trajectory costs remains an important direction.

%% file: sections/appendix-proofs.tex
\section{Proofs}
\label{sec:appendix-proofs}

\begin{proof}[Proof of Proposition~\ref{prop:unbiased}]
Fix an action-augmented history $h_t^+ = (s_0,a_0,\dots,s_t,a_t)$. Determinism of the sandbox fixes the environment transitions for any realised future action sequence, but the policy continuation after $a_t$ remains stochastic. Let
\[
G(s,a,\pi_\theta)=\mathbb{E}_{a_{t+1:\,T}\sim\pi_\theta}\!\left[R(\tau)\mid s_t=s, a_t=a\right]
\]
denote the expected terminal return after taking action $a$ at state $s$ and then continuing with $\pi_\theta$. By definition, $G(s_t,a_t,\pi_\theta)=Q^{\pi_\theta}(s_t,a_t)$, and similarly $G(s_t,\tilde{a}_t,\pi_\theta)=Q^{\pi_\theta}(s_t,\tilde{a}_t)$ for a realised counterfactual action. Taking expectations over both the counterfactual action $\tilde{a}_t\sim q_{\neg a_t}$ and the future policy continuation randomness gives
\[
\mathbb{E}\!\left[\hat{A}^{\text{CRR}}_t\mid h_t^+\right] = G(s_t,a_t,\pi_\theta) - \mathbb{E}_{\tilde{a}_t\sim q_{\neg a_t}}\!\left[G(s_t,\tilde{a}_t,\pi_\theta)\right] = Q^{\pi_\theta}(s_t,a_t) - \mathbb{E}_{\tilde{a}_t\sim q_{\neg a_t}}\!\left[Q^{\pi_\theta}(s_t,\tilde{a}_t)\right].
\]
If one uses the idealised proposal $q=\pi_\theta$ without excluding the realised action, the second term is $V^{\pi_\theta}(s_t)$ and the contrast recovers the classical advantage. The implemented proposal conditions on $\tilde{a}_t\ne a_t$, so the exact estimator targets the alternative-action contrast stated in Proposition~\ref{prop:unbiased}.
\end{proof}

\begin{proof}[Proof of Proposition~\ref{prop:variance}]
Let $X = R(\tau)$, $Y = R(\tilde{\tau}_t)$, and $\bar{Y}$ a sibling-group baseline drawn from the same group. For any conditioning variable or bucket $B$,
\begin{align*}
\mathrm{Var}[X-Y\mid B] &= \mathrm{Var}[X\mid B] + \mathrm{Var}[Y\mid B] - 2\,\mathrm{Cov}(X,Y\mid B),\\
\mathrm{Var}[X-\bar{Y}\mid B] &= \mathrm{Var}[X\mid B] + \mathrm{Var}[\bar{Y}\mid B] - 2\,\mathrm{Cov}(X,\bar{Y}\mid B).
\end{align*}
Subtracting the two identities gives the decomposition in Proposition~\ref{prop:variance}; rearranging terms gives the stated sufficient condition. Section~\ref{sec:appendix-snr} reports prefix-bucket covariance and SNR diagnostics supporting the empirical regime in which this condition is plausible, but the decomposition itself does not assert universal variance reduction.
\end{proof}

%% file: sections/appendix-algorithm.tex
\section{Algorithm Pseudo-Code}
\label{sec:appendix-algorithm}

Algorithm~\ref{alg:crr} reproduces the per-iteration training loop of CRR in pseudo-code. It is implemented atop the asynchronous multi-turn rollout dispatcher of \textsc{SkyRL-Agent}~\citep{cao2025skyrl} with the container-free sandboxing of \textsc{SWE-MiniSandbox}~\citep{yuan2026minisandbox}. The fork pool runs on dedicated workers in parallel with the main rollout pool; in our profile, fork/materialisation and counterfactual rollout are the two largest overhead components (Section~\ref{sec:appendix-cost-breakdown}).

\begin{algorithm}[h]
\caption{Counterfactual Rollout Replay (per training iteration)}
\label{alg:crr}
\begin{algorithmic}[1]
\Require Policy $\pi_\theta$, task batch $\{i_1,\dots,i_B\}$, budget $k$, entropy threshold $\eta$, proposal $q_{\neg a}$
\Ensure Updated policy parameters $\theta'$
\For{each task $i_b$ in parallel}
  \State Sample $G$ on-policy trajectories $\{\tau^{(g)}\}_{g=1}^G$ from $\pi_\theta$ in the sandbox
  \For{$g=1,\dots,G$}
    \State Compute step scores via Eq.~\eqref{eq:selector}; let $\mathcal{I}_g$ be up to $k$ non-terminal indices with $H_t>\eta$
    \For{$t\in\mathcal{I}_g$}
      \State Snapshot sandbox state $\sigma(s_t)$
      \State Sample $\tilde{a}_t\sim q_{\neg a_t}(\cdot\mid s_t)$ with $\tilde{a}_t\ne a_t$
      \State Restore $\rho(\sigma(s_t))$ and roll out $\tilde{\tau}_t$ on-policy from $\tilde{a}_t$
      \State Record $R(\tilde{\tau}_t)$
    \EndFor
  \EndFor
  \State Form hybrid advantages $\hat{A}^{\star}_t$ via Eq.~\eqref{eq:hybrid}
\EndFor
\State Update $\theta\leftarrow \theta + \alpha\,\nabla_\theta\,\mathcal{L}_{\text{CRR}}(\theta)$ using all collected $(s_t,a_t,\hat{A}^{\star}_t)$
\end{algorithmic}
\end{algorithm}

%% file: sections/appendix-snr.tex
\section{Empirical Correlation Diagnostics}
\label{sec:appendix-snr}

Proposition~\ref{prop:variance} gives a variance decomposition and a sufficient covariance-and-baseline condition for CRR to reduce conditional variance relative to an outcome baseline. As a diagnostic for this regime, we measure whether on-policy and counterfactual returns covary positively within task/prefix buckets. For every training step at which counterfactual rollouts are collected, we bin paired returns by task id, selected-step decile, and a prefix signature derived from the first ten semantic tokens of the realised action history, then compute the within-bucket covariance and average across buckets weighted by bucket size.

\begin{table}[htbp]
\centering
\caption{Empirical task/prefix-bucket covariance $\hat{\mathrm{Cov}}(R(\tau), R(\tilde{\tau}_t)\mid B)$ at five checkpoints during training. The covariance remains strictly positive throughout, with the average value increasing as the policy becomes more confident in its prefix decisions.}
\label{tab:cov}
\small
\begin{tabular}{lccccc}
\toprule
Step                              & 200    & 1{,}000 & 2{,}000 & 3{,}000 & 4{,}000\\
\midrule
$\hat{\mathrm{Cov}}$ (mean)        & $0.044$ & $0.071$ & $0.103$ & $0.119$ & $0.128$\\
$\hat{\mathrm{Cov}}$ (5th pct.)    & $0.018$ & $0.029$ & $0.046$ & $0.059$ & $0.068$\\
$\hat{\mathrm{Cov}}$ (95th pct.)   & $0.094$ & $0.131$ & $0.171$ & $0.183$ & $0.198$\\
\bottomrule
\end{tabular}
\end{table}

The 5th-percentile covariance is strictly positive at every checkpoint. This does not by itself prove the sufficient condition in Proposition~\ref{prop:variance}, because the condition also depends on counterfactual variance and sibling-baseline covariance. It supports the intended regime diagnostically: as the policy improves, task/prefix buckets become more predictive of eventual outcome and counterfactual continuations from nearby prefixes end up in increasingly correlated terminal states.

The plotted per-step advantage-SNR diagnostic uses the magnitude $|\hat A_t^\star|$ of the hybrid advantage relative to its standard deviation across the rollout pool, recorded every 200 optimizer updates. It is an advantage-signal diagnostic; it does not directly measure the variance of the complete policy-gradient update.

\begin{figure}[htbp]
\centering
\includegraphics[width=0.75\linewidth]{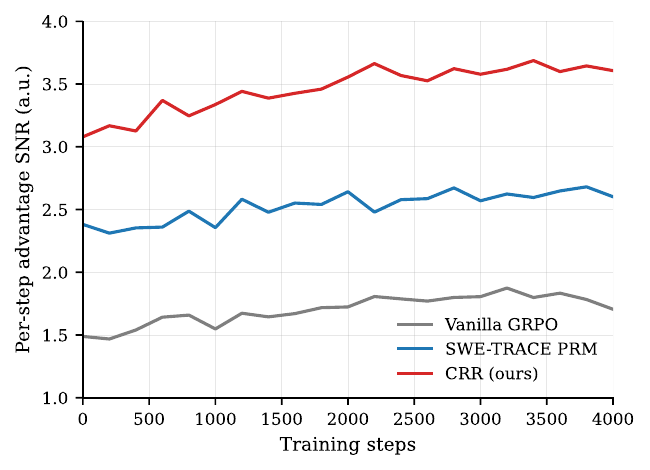}
\caption{Per-step advantage SNR across training. CRR achieves approximately twice the advantage SNR of vanilla GRPO and about $1.4$ times that of the \textsc{SWE-Trace} PRM in the reported diagnostic. This advantage-SNR comparison does not establish the sufficient variance condition in Proposition~\ref{prop:variance}.}
\label{fig:snr}
\end{figure}

%% file: sections/appendix-hyperparams.tex
\section{Hyperparameters}
\label{sec:appendix-hyperparams}

Table~\ref{tab:hparams} lists the full-budget CRR configuration. We distinguish newly collected on-policy environment trajectories from optimizer minibatch updates over the collected rollout buffer. The reduced-budget controls use a fixed 48 H100-hour budget and may complete different numbers of trajectories and updates. We did not tune CRR-specific hyperparameters separately from the GRPO baseline; the only parameters introduced by CRR are the fork budget $k$, the counterfactual temperature $T_{\text{cf}}$, the entropy threshold $\eta$, and the type weight $\lambda_{\text{type}}$.

\begin{table}[htbp]
\centering
\caption{Default hyperparameters for the full-budget CRR runs. Ablations vary their stated factor, and equal-device-time controls may complete fewer trajectories or updates.}
\label{tab:hparams}
\small
\begin{tabular}{lp{0.57\linewidth}}
\toprule
Hyperparameter & Value\\
\midrule
Base model                          & \textsc{Qwen3-14B}\\
Training data                       & Post-filter \textsc{SWE-Gym} (2{,}438) + disjoint \textsc{SWE-rebench} train split (1{,}200)\\
RL algorithm                        & GRPO with CRR advantage\\
Rollout dispatch microbatch size    & 64\\
Group size $G$                      & 8\\
Trajectories per rollout collection & 256\\
Rollout collection rounds           & about 94\\
Total on-policy rollout trajectories & 24{,}000\\
Max turns per trajectory $H$        & 50\\
Fork budget $k$                     & 4\\
Counterfactual samples per fork $K$ & 1 in the full-budget main comparison\\
Counterfactual temperature $T_{\text{cf}}$ & 0.7\\
Selector entropy threshold $\eta$   & 0.4\\
Selector type weight $\lambda_{\text{type}}$ & 0.6\\
Learning rate                       & $5\times 10^{-6}$\\
Warmup steps                        & 100\\
Optimizer minibatch/update steps    & 4{,}000\\
KL penalty $\beta$                  & 0.04\\
PPO clip $\epsilon$                 & 0.2\\
Discount $\gamma$                   & 1.0\\
LoRA rank / alpha                   & 128 / 256\\
Optimiser                           & AdamW ($\beta_1=0.9$, $\beta_2=0.95$)\\
Hardware                            & $8\times$NVIDIA H100 80GB\\
Framework                           & \textsc{SkyRL-Agent} + \textsc{SWE-MiniSandbox}\\
Main CRR training compute           & 321 H100-hours per seed; see Table~\ref{supp-accounting}\\
Random seeds                        & $\{11,23,47\}$\\
\bottomrule
\end{tabular}
\end{table}

%% file: sections/appendix-prompts.tex
\section{Prompt and Tool Catalogue}
\label{sec:appendix-prompts}

All experiments share the same prompting scaffold to isolate the contribution of the training procedure. The system prompt instructs the agent to act as a software engineer resolving a single GitHub issue through a fixed tool catalogue; the user prompt contains the issue title, body, and an indication of the working directory. The system prompt is reproduced below for reference.

\begin{quote}\small
\textbf{System prompt (verbatim):}\\
You are an autonomous software engineer working in a containerised development environment. Your goal is to resolve the GitHub issue described by the user by editing the relevant source files and validating your changes against the project's test suite. You have access to the following tools: \texttt{view\_file}, \texttt{edit\_file}, \texttt{run\_shell}, \texttt{run\_tests}, \texttt{submit\_patch}. Each turn you must emit a single tool call as a JSON object. The episode terminates when you call \texttt{submit\_patch} or after $50$ turns. Read carefully before editing, edit minimally, and verify with the test suite before submission.
\end{quote}

\paragraph{Tool catalogue.} The agent has five tools: \texttt{view\_file(path, start, end)} returns up to 200 lines of a file with line numbers; \texttt{edit\_file(path, start, end, replacement)} replaces a contiguous range with new content and returns a diff; \texttt{run\_shell(command, timeout)} executes a shell command in the repository root, with a default 60-second timeout; \texttt{run\_tests(test\_filter)} invokes the project's canonical test runner, filtered by an optional pytest-style selector; and \texttt{submit\_patch()} terminates the episode and triggers the held-out test suite. The tool schemas used by the harness follow these semantics, and stdout is truncated deterministically to keep observations bounded.

\paragraph{Decoding settings.} Greedy pass@1 evaluation uses temperature $0.0$, while training rollouts use temperature $0.6$; the counterfactual proposal $q_{\neg a_t}$ overrides the latter with $T_{\text{cf}}=0.7$ and rejection-conditions on an alternative action as described in Section~\ref{sec:method-proposal}. The multi-sample evaluation in Table~\ref{tab:sampling-main} instead uses evaluation temperature 0.6 and eight completions per instance. Tool calls are constrained by a JSON-grammar decoder that rejects malformed structured outputs at the token level; rejection statistics during training are reported in Section~\ref{sec:appendix-extra-ablations}.

%% file: sections/appendix-determinism.tex
\section{Determinism Filter and Network Replay}
\label{sec:appendix-determinism}

CRR's unbiasedness guarantee (Proposition~\ref{prop:unbiased}) hinges on the assumption that the sandbox is deterministically forkable. Real SWE task instances violate this assumption in three ways: (i) tests that are flaky in time, ordering, or random seed; (ii) tests that depend on network state (package indexes, time-of-day APIs, mutable third-party services); and (iii) tests whose environment introduces non-determinism through file-system races or thread scheduling. We describe the three filters and replay mechanisms used to enforce determinism on the training corpus.

\paragraph{Three-rerun consistency filter.} For every upstream candidate task instance we execute the canonical solution patch three times in independent sandboxes, with different random seeds and shuffled test orderings. A task is admitted to the training set only if the held-out test suite produces identical pass/fail outcomes on all three runs. This filter discards 6.8\% of the upstream \textsc{SWE-Gym} candidate pool and 9.4\% of the upstream \textsc{SWE-rebench} candidate pool. After filtering, the admitted corpus used by the main training runs contains 2{,}438 \textsc{SWE-Gym} tasks and 1{,}200 \textsc{SWE-rebench} training tasks, for 3{,}638 deterministic instances total. The discarded instances are dominated by tests that depend on dictionary iteration order, integer hashing, or implicit time-zone conversion; we report per-repository discard rates for the largest repositories in Table~\ref{tab:discard}.

\begin{table}[htbp]
\centering
\caption{Discard rate of the three-rerun determinism filter for the largest repositories in the upstream \textsc{SWE-Gym} and \textsc{SWE-rebench} candidate pools. Repositories with networked test fixtures dominate the discards.}
\label{tab:discard}
\small
\begin{tabular}{lcc}
\toprule
Repository & Tasks evaluated & Discard rate (\%)\\
\midrule
\texttt{sympy}        & 482 & 4.1\\
\texttt{django}       & 591 & 11.7\\
\texttt{scikit-learn} & 277 & 5.4\\
\texttt{pytest}       & 162 & 7.4\\
\texttt{matplotlib}   & 191 & 6.8\\
\texttt{sphinx}       & 219 & 9.6\\
\texttt{flask}        & 138 & 12.3\\
\texttt{requests}     & 96  & 18.8\\
\texttt{astropy}      & 240 & 5.0\\
\texttt{xarray}       & 166 & 6.0\\
\bottomrule
\end{tabular}
\end{table}

\paragraph{Deterministic package mirror.} Network operations during patch installation are routed through a deterministic mirror that pins every package version, hash, and metadata blob at the time the task instance is admitted. Subsequent counterfactual rollouts hit the same frozen mirror, so \texttt{pip install} and equivalent operations return bit-identical artefacts even if upstream packages mutate during training.

\paragraph{HTTP replay layer.} For tasks whose test suite issues outbound HTTP requests (a small minority dominated by \texttt{requests}, \texttt{flask}, and \texttt{sphinx} extension hooks), we capture every request-response pair during canonical solution execution and replay them deterministically during training. The replay layer matches on request method, URL, and a normalised body digest; mismatches are surfaced as task-level errors and the instance is excluded.

\paragraph{\textsc{SWE-rebench} split hygiene.} The 1{,}200 admitted \textsc{SWE-rebench} training tasks are disjoint from the held-out \textsc{SWE-rebench} Python evaluation split. We split the candidate stream by collection month before filtering and then remove overlaps by task identifier, issue URL, pull-request URL, repository/commit window, exact patch hunk, near-duplicate patch hunk, and near-duplicate issue text. Held-out evaluation tasks are never used for rollout collection, counterfactual replay, selector tuning, or hyperparameter selection.

\paragraph{Residual non-determinism.} Even after all three filters, a small residual non-determinism survives at the level of file-system iteration order on certain glob-heavy tests; we treat such cases as additional noise on top of the binary terminal reward. The full-budget main comparison uses $K=1$ counterfactual sample per selected fork. In a separate residual-nondeterminism robustness setting, $K=2$ averaging reduces stochastic terminal-noise variance at a reported $1.15\times$ throughput penalty; this setting is excluded from the full-budget ledger in Table~\ref{supp-accounting}. Averaging is not a correction for systematic replay bias, and the additional single-seed injection experiments establish only the behaviour observed under their stated reduced-budget protocol.

%% file: sections/appendix-extra-ablations.tex
\section{Additional Ablations}
\label{sec:appendix-extra-ablations}

\paragraph{Fork-pool concurrency.} The asynchronous fork pool runs counterfactual rollouts on a separate logical worker pool while the main rollout pool produces on-policy trajectories. We profile concurrent counterfactual workers $W\in\{1,2,4,8,16\}$. Table~\ref{tab:concurrency} reports relative throughput inside this fork-pool profiling harness, not the end-to-end on-policy-trajectory per H100-hour accounting used in Table~\ref{supp-accounting}. Throughput grows sublinearly: increasing $W$ from 8 to 16 changes the relative throughput from 4.25 to 4.46, about $4.9\%$. The reported pass-rates range from 41.6 to 41.7\,\%; this local profile is distinct from the reduced-budget equal-wall-clock $W=1$ control.

\begin{table}[htbp]
\centering
\caption{Fork-pool profiling throughput and pass-rate as a function of counterfactual-worker concurrency $W$. Throughput is normalised to $W=1$ inside the profiling harness. These profiling results are not an end-to-end device-time ledger or evidence of invariance under arbitrary changes in concurrency.}
\label{tab:concurrency}
\small
\begin{tabular}{lccccc}
\toprule
Concurrency $W$                   & 1     & 2     & 4     & 8     & 16\\
\midrule
Relative fork-pool throughput      & 1.00  & 1.75  & 2.89  & 4.25  & 4.46\\
Verified pass@1 (\%)               & 41.6  & 41.7  & 41.7  & 41.7  & 41.6\\
\bottomrule
\end{tabular}
\end{table}

\paragraph{GRPO vs.\ PPO outer loop.} CRR can be inserted into either policy-gradient outer loop. Replacing GRPO with PPO yields $41.4\pm 0.7$\,\% for CRR-PPO on \textsc{SWE-bench Verified}, compared with $41.7\pm 0.6$\,\% for CRR-GRPO and $34.1\pm 1.0$\,\% for the outcome-only PPO baseline. The reported PPO improvement is $+7.3$ points. These endpoint summaries are descriptive; they do not establish statistical equivalence between CRR-PPO and CRR-GRPO or a corrected equal-compute PPO comparison. Equation~\eqref{eq:hybrid} changes the selected-step advantages while retaining the outer-loop update structure.

\paragraph{Selector baselines.} The Random selector samples up to $k=4$ eligible non-terminal steps uniformly without replacement from the same entropy-thresholded eligibility set used by CRR. The Stride selector sorts eligible non-terminal steps by time and chooses approximately evenly spaced quantiles of the trajectory. Both baselines use the same fork budget and counterfactual proposal as CRR; only the decision-index selection rule changes.

\paragraph{Tool-grammar rejection statistics.} The JSON-grammar decoder used during training rejects $1.4\%$ of tokens at the start of training and $0.3\%$ at convergence, with malformed tool calls falling fastest. We see no evidence that grammar enforcement biases the counterfactual proposal: the rejection rate at counterfactual-sampling temperature $T_{\text{cf}}{=}0.7$ is within $0.05$ percentage points of the on-policy rate at every checkpoint.

\paragraph{Counterfactual sample size $K$.} We compare the main setting $K{=}1$ (one counterfactual per fork) against $K{=}2$ and $K{=}4$. The averaging effect materialises as expected: per-step advantage SNR scales as approximately $\sqrt{K}$. Pass-rate improvements diminish quickly: $K{=}2$ gains $+0.4$ points over $K{=}1$ at $1.6\times$ counterfactual cost, while $K{=}4$ gains only $+0.6$ points at $3.2\times$ cost. We adopt $K{=}1$ in all main-text numbers as the Pareto-optimal choice; higher-$K$ runs are ablations and residual-nondeterminism robustness checks rather than part of the main accounting.

\paragraph{Curriculum effects.} We checked whether the relative gain from CRR is sensitive to a difficulty-sorted training curriculum. Sorting tasks from easy to hard by canonical-trajectory length increases vanilla GRPO pass-rate by $+0.7$ points but only changes CRR pass-rate by $+0.2$ points; CRR is partially robust to curriculum choice because its per-step advantages already discount easy steps with low entropy. We use the unsorted random ordering for all main-text experiments to remove this confound.

%% file: sections/appendix-perrepo.tex
\section{Additional Per-Repository Evaluation}
\label{sec:appendix-perrepo}

Figure~\ref{fig:perrepo} reports an additional per-repository comparison on \textsc{SWE-bench Verified}, using a separate set of experimental runs from the main comparison in Table~\ref{tab:main}. CRR achieves higher pass@1 than vanilla GRPO on nine of the ten displayed repositories and matches GRPO on \texttt{flask}. These comparisons do not isolate trajectory length from other repository characteristics.

\begin{figure}[htbp]
\centering
\includegraphics[width=\linewidth]{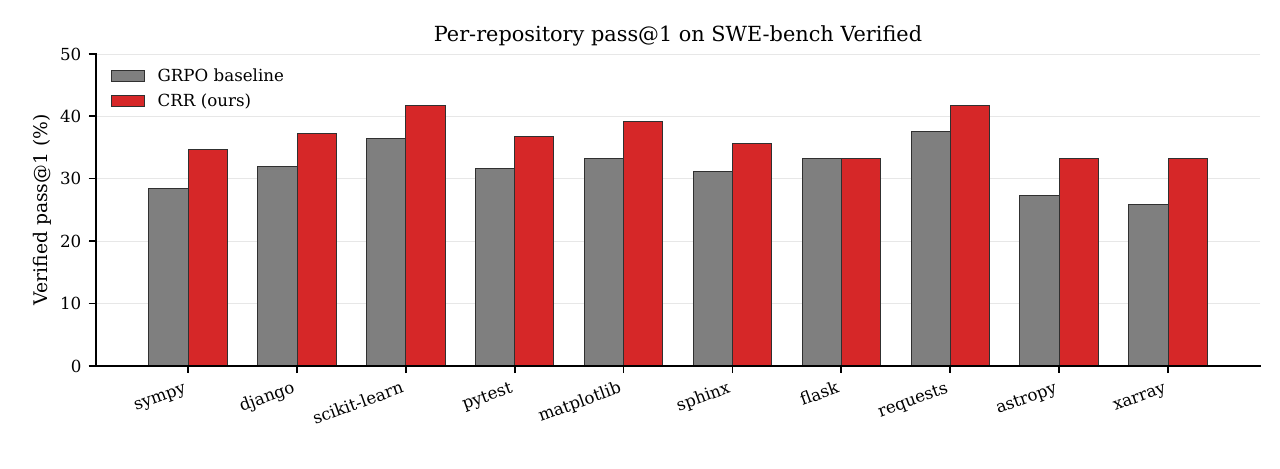}
\caption{Per-repository pass@1 on \textsc{SWE-bench Verified} from a separate experiment, averaged over three seeds.}
\label{fig:perrepo}
\end{figure}

%% file: sections/appendix-error-taxonomy.tex
\section{Error-Mode Taxonomy and Per-Mode Breakdown}
\label{sec:appendix-error-taxonomy}

We adopt the four-class taxonomy of \citet{wang2025practitioner} for analysing failed trajectories and add an ``Other'' bucket for failures that the deterministic classifier cannot assign cleanly to one of the four modes. Each failed instance is labelled by a single dominant error mode through a deterministic post-hoc classifier that inspects the agent's final tool-call sequence and the diagnostic output of the held-out test suite.

\begin{description}
\item[Wrong localisation.] The agent edits a file or function that is not on the canonical fix path. The held-out test suite fails because the actual defect remains untouched. This mode is dominant on repositories with deep abstraction layers (\texttt{sympy}, \texttt{astropy}).
\item[Partial fix.] The agent edits the correct location but misses a co-required change elsewhere (e.g.\ updates the implementation but not its caller, or fixes the reported test but leaves a regression elsewhere). The held-out test suite passes the originally-failing test but fails at least one previously-passing test.
\item[Regression.] The agent's edit introduces a new failure unrelated to the issue, typically by an aggressive global rewrite or by a misfired \texttt{sed}-style transformation.
\item[Harness subversion.] The agent modifies the test harness, conftest fixtures, or assertion expectations rather than the implementation under test. Such a patch may produce an apparent verifier pass by changing the verifier target itself, but we count it as a failure after an audit check flags the modified harness files. We detect this class through a static check on the patched files against a canonical exclusion list.
\end{description}

\begin{table}[htbp]
\centering
\caption{Error-mode distribution among failed instances on \textsc{SWE-bench Verified}, normalised to percentages of all failures of the corresponding method. The table reports conditional shares among failures; Table~\ref{tab:errors-absolute} converts the main comparison to approximate absolute rates over all evaluated instances.}
\label{tab:errors}
\small
\begin{tabular}{lccccc}
\toprule
Method               & Wrong loc. & Partial fix & Regression & Harness subv. & Other\\
\midrule
Vanilla GRPO         & $34.2$     & $39.7$      & $19.4$     & $6.7$ & $0.0$\\
\textsc{SWE-Trace}   & $33.6$     & $35.1$      & $24.5$     & $6.8$ & $0.0$\\
CRR (ours)           & $39.5$     & $24.2$      & $11.3$     & $6.6$ & $18.4$\\
CRR + \textsc{SWE-Trace} & $40.8$ & $21.9$      & $9.1$      & $6.8$ & $21.4$\\
\bottomrule
\end{tabular}
\end{table}

\begin{table}[htbp]
\centering
\caption{Approximate absolute error-mode rates over all \textsc{SWE-bench Verified} instances, obtained by multiplying Table~\ref{tab:errors}' conditional failure shares by the corresponding failure rates from Table~\ref{tab:main}.}
\label{tab:errors-absolute}
\small
\begin{tabular}{lccccc}
\toprule
Method & Wrong loc. & Partial fix & Regression & Harness subv. & Other\\
\midrule
Vanilla GRPO & $21.75$ & $25.25$ & $12.34$ & $4.26$ & $0.00$\\
CRR (ours)   & $23.03$ & $14.11$ & $6.59$  & $3.85$ & $10.73$\\
\bottomrule
\end{tabular}
\end{table}

Relative to vanilla GRPO, CRR has lower conditional shares and lower approximate absolute rates of \emph{partial fix} and \emph{regression} failures, alongside a larger residual ``Other'' bucket. This pattern does not hold uniformly across methods: \textsc{SWE-Trace}'s conditional regression share is $24.5\%$, compared with $19.4\%$ for vanilla GRPO. The error taxonomy is a descriptive post-hoc analysis and does not establish that an individual mode changed because of improved credit assignment. \emph{Harness subversion} remains present in all methods, so these results do not show that counterfactual credit alone resolves verifier exploitation. We discuss the implications for future closed-loop training pipelines in Section~\ref{sec:discussion}.

%% file: sections/appendix-cases.tex
\section{Qualitative Case Studies}
\label{sec:appendix-cases}

This appendix presents three illustrative trajectories from the training run. For each case we report the on-policy trajectory, a sampled counterfactual rollout from a selected decision point, and the resulting per-step advantage. These examples illustrate how the sampled return contrast enters training; a single pair of stochastic continuations does not establish the causal contribution of an individual action.

\subsection*{Case study 1: a localisation-driven win}

The agent is asked to fix a numerical-precision bug in a popular scientific library. On the on-policy trajectory the agent reads the test file at turn 3, recognises the relevant tolerance constant, edits a single line at turn 7, runs the focused test at turn 8, and submits at turn 11; the held-out test suite passes. CRR selects turn 3 as a high-leverage decision point: at this turn the policy entropy is high (the agent could equally have read the implementation file or the documentation). The counterfactual fork samples ``view\_file(documentation.md)'' instead and continues on-policy; without the test file in context the agent fails to identify the correct tolerance and the held-out test suite fails. The resulting advantage is $\hat{A}_3 = 1 - 0 = 1$, sharply credited to the early-localisation decision rather than to the eventual edit.

\subsection*{Case study 2: an avoided regression}

The agent is asked to fix a serialisation bug. On the on-policy trajectory the agent edits the serialiser correctly at turn 12 and verifies with a focused test at turn 14, then submits at turn 16; the held-out test suite passes. CRR selects turn 12 as a high-leverage decision point. The counterfactual fork samples a more aggressive global rewrite of the serialiser interface; on-policy continuation in the fork produces a patch that passes the focused test but introduces a regression in a previously-passing serialisation roundtrip test. The sampled advantage is $\hat{A}_{12} = 1 - 0 = 1$, assigning positive credit to the realised edit relative to this sampled alternative.

\subsection*{Case study 3: a beneficial counterfactual}

The agent is asked to fix a parameter-validation bug. On the on-policy trajectory the agent edits the validator at turn 6 but introduces a typo in an exception message, causing a ``raises\_with\_message'' assertion in the held-out test suite to fail; reward is $0$. CRR selects turn 6. The counterfactual fork samples a slightly different edit at turn 6 that happens to use the correct exception message; on-policy continuation produces a passing patch with reward $1$. Advantage: $\hat{A}_6 = 0 - 1 = -1$, sharply punishing the typo-introducing edit even though the bulk of the trajectory was correct---a credit assignment that outcome-only RL with a single bit at the end of fifty turns cannot reproduce.

These cases illustrate early localisation, a contrast between edits with different regression outcomes, and a beneficial alternative to a failing edit. They are examples rather than an estimate of how much of the total training signal each pattern explains.

%% file: sections/appendix-cost-breakdown.tex
\section{Rollout-Dispatch Cost Breakdown}
\label{sec:appendix-cost-breakdown}

Table~\ref{tab:cost-appendix} reports the critical-path wall-clock latency of one rollout-dispatch microbatch of 64 trajectories on the $8\times$H100 node, referenced in the additional diagnostics of Section~\ref{sec:appendix-additional-discussion}. CRR multiplies the microbatch latency of vanilla GRPO by $1.71\times$, dominated by counterfactual fork/materialisation, counterfactual rollout, and additional test execution; the policy update step itself is essentially unchanged. The largest single overhead component in our stack is counterfactual fork/materialisation (36.2 seconds), followed by counterfactual rollout (28.1 seconds), suggesting that future infrastructure improvements should target both snapshot/restore materialisation and asynchronous rollout scheduling.

\begin{table}[htbp]
\centering
\caption{Critical-path wall-clock latency per 64-trajectory rollout-dispatch microbatch on $8\times$H100 (seconds). The reported CRR microbatch latency is $1.71\times$ that of vanilla GRPO. This local profile does not replace the end-to-end device-time accounting in Table~\ref{supp-accounting} or the training-window comparison in Table~\ref{supp-wallclock}.}
\label{tab:cost-appendix}
\small
\begin{tabular}{lccc}
\toprule
Component & Vanilla GRPO & CRR (ours) & $\Delta$\\
\midrule
Forward rollout (32 turns)        & $78.0$ & $78.0$ & $+0.0$\\
Test execution                    & $9.4$  & $9.4$  & $+0.0$\\
Counterfactual fork ($k=4$)       & $0.0$  & $36.2$ & $+36.2$\\
Counterfactual rollout            & $0.0$  & $28.1$ & $+28.1$\\
Counterfactual test execution     & $0.0$  & $6.3$  & $+6.3$\\
Policy update (GRPO step)         & $12.7$ & $12.9$ & $+0.2$\\
\midrule
Total                             & $100.1$ & $170.9$ & $+70.8$\\
\bottomrule
\end{tabular}
\end{table}

%% file: sections/appendix-compute.tex
\section{Compute Accounting and Equal-Wall-Clock Comparison}
\label{sec:appendix-compute}

Figure~\ref{fig:sample-efficiency} measures sample efficiency in newly collected on-policy trajectories; it does not charge counterfactual replay. The consolidated device-time accounting in Table~\ref{supp-accounting} charges 188 H100-hours to vanilla GRPO, 233 to \textsc{SWE-Trace}, and 321 to CRR per full-budget run and seed. CRR's total includes 133 H100-hours of fork-pool device-time in addition to 164 for on-policy rollout and 24 for learner updates. These scheduler-attributed components are mutually exclusive; overlapping their execution does not remove their device-time cost.

Table~\ref{supp-wallclock} provides the equal-wall-clock comparison on the same $8\times$H100 node. At 40.125 hours, CRR reaches 41.7\,\% on \textsc{SWE-bench Verified}, versus 36.7\,\% for the extended vanilla-GRPO run, a 5.0-point difference in three-seed means. This comparison includes dispatch, tests, forks, waits, retries, optimisation, and checkpoints; one-time loading, final evaluation, and separate ablations are excluded. The comparison establishes the observed margin over vanilla GRPO at the reported operating points, not equal-compute dominance over every baseline or on every benchmark.

\begin{table}[htbp]
\centering
\caption{Configuration and accounting conventions for the full-budget CRR run. Method-specific device-time components are reported in Table~\ref{supp-accounting}; equal-wall-clock checkpoints are reported in Table~\ref{supp-wallclock}.}
\label{tab:accounting}
\small
\setlength{\tabcolsep}{5pt}
\begin{tabular}{lp{0.24\linewidth}p{0.42\linewidth}}
\toprule
Quantity & Main value & Meaning\\
\midrule
Training corpus & 3{,}638 tasks & post-filter admitted deterministic tasks\\
\textsc{SWE-rebench} split & disjoint & train/eval separated before filtering and decontaminated\\
On-policy trajectories & 24{,}000 & newly collected policy rollouts per full-budget CRR run/seed\\
Fork budget & $k=4$ & maximum selected decision points per trajectory\\
Counterfactual samples & $K=1$ & main setting; $K=2,4$ only in ablations\\
Main counterfactual tuples & about 96{,}000 & nominal maximum $24{,}000\times k\times K$ per seed\\
Main CRR training compute & 321 H100-hours & rollout, fork-pool, and learner device-time per seed\\
CRR node wall-clock & 40.125 hours & $321/8$ on the $8\times$H100 node\\
Cost profile unit & 64-trajectory microbatch & Table~\ref{tab:cost-appendix}, not a single trajectory\\
\bottomrule
\end{tabular}
\end{table}

%% file: sections/appendix-additional-discussion.tex
\section{Additional Diagnostics and Discussion}
\label{sec:appendix-additional-discussion}

\paragraph{Limitations.}
CRR rests on deterministic forkability, which breaks down in three regimes. Flaky tests bias the counterfactual differential; a three-rerun consistency filter discards $\sim$7\,\% of the upstream \textsc{SWE-Gym} candidate pool to enforce determinism (Section~\ref{sec:appendix-determinism}). Network-dependent operations are mitigated by a deterministic package mirror and an HTTP replay layer. Residual file-system or scheduling nondeterminism is handled in robustness checks with $K$-shot averaging, while the main results use $K=1$. Fork/materialisation and counterfactual rollout remain the dominant training-time overheads, motivating throughput-aware scheduling that more aggressively interleaves on-policy and counterfactual rollouts. Finally, the analysis assumes binary terminal rewards and an alternative-action proposal $q_{\neg a_t}$ with support on plausible policy actions; extending to weighted or multi-objective rewards requires generalising Eq.~\eqref{eq:crr} to weighted return functionals.

\paragraph{Repository comparisons, error modes, and cost.}
Section~\ref{sec:appendix-perrepo} reports an additional per-repository comparison from a separate experiment. These results do not isolate the effect of severe trajectory-length skew on training throughput. Using the error taxonomy of \citet{wang2025practitioner} (Section~\ref{sec:appendix-error-taxonomy}), CRR has lower conditional shares of \emph{partial fix} and \emph{regression} failures than vanilla GRPO, but also a larger residual ``Other'' category. The approximate rates over all evaluated instances are reported separately in Table~\ref{tab:errors-absolute}; conditional shares should not be read as absolute failure rates or causal evidence about credit assignment. CRR multiplies rollout-dispatch microbatch latency by $1.71\times$ (Section~\ref{sec:appendix-cost-breakdown}). With all fork overhead charged, the equal-wall-clock endpoint in Table~\ref{supp-wallclock} is 41.7\,\% for CRR versus 36.7\,\% for vanilla GRPO on \textsc{SWE-bench Verified}.

\paragraph{Model-based framing and broader impact.}
CRR can be read as a degenerate model-based RL method~\citep{sutton1991dyna} whose ``model'' is the actual environment rather than a learned approximator: it inherits access to the true transition function but pays the full per-fork cost. Because CRR depends on a verifier, it is bounded by that verifier's integrity. In Section~\ref{sec:appendix-error-taxonomy}, \emph{harness subversion} accounts for $6.6\%$ of CRR failures and $6.7\%$ of vanilla-GRPO failures. Their approximate rates over all evaluated instances are $3.85\%$ and $4.26\%$, respectively; these descriptive results do not establish a specific effect on verifier exploitation. We recommend pairing CRR-trained agents with adversarial test-suite hardening~\citep{yu2026sweabs} and execution-free corroboration such as \textsc{SWE-RM}~\citep{shum2025swerm}. AI-assisted cybersecurity reviews also emphasise data protection and human oversight~\citep{weng2024leveraging}. Every counterfactual rollout produces an auditable execution trace that can support examination of the resulting process signal.

\paragraph{Semantic understanding and the conditioning context.}
The policy's representation and interpretation of a task influence the decisions available for credit assignment. Domain-adaptive pretraining for multimodal governance uses task-specific perception, question answering, and reasoning supervision~\citep{wang-etal-2025-reasoning-enhanced}. Complementary work examines prompt structure~\citep{zhang2025does}, question calibration and multi-hop temporal reasoning~\citep{xue2024question}, fine-grained semantic conflicts~\citep{xue2023dual}, and structured contrastive semantic matching~\citep{xue2025structcoh}. These approaches address upstream understanding and representation; they provide possible complements to learning from the execution outcomes of selected actions.

\paragraph{Feedback and supervision-budget allocation.}
Evidence-conditioned critique and iterative revision support retrieval-augmented reasoning~\citep{weiretrieval}, while disclosure policies learn when reasoning should continue before emitting supported content~\citep{wei2026think}. Selection of additional supervision can also exploit error neighbourhoods in representation space~\citep{10.1145/3711896.3737195}. These mechanisms suggest possible alternatives or complements to CRR's entropy-based allocation of counterfactual rollouts. Their usefulness for a fork selector would require evaluation on the same SWE tasks and compute budget.

\paragraph{Broader agent environments and evaluation coverage.}
Agent workflows also appear in automated scientific discovery~\citep{wei2025ai} and multi-agent poster generation~\citep{zhang2025postergen}. Embodied action generation can use intermediate vision-language representations routed to an action decoder~\citep{zhang2026liralocalcrosslayerinformation}. In decentralised multi-agent control, local observations and communication constraints shape the information available to a policy~\citep{fan2026communicationawaremultiagentreinforcementlearning}. Extending executable contrasts to these settings would depend on reliable state restoration and an appropriate verifier. Cross-lingual reasoning training and culturally grounded multimodal evaluation~\citep{zheng2025ccl,zheng2026mmac} additionally identify coverage dimensions beyond the current SWE experiments.

\paragraph{Task objectives, planning, and deployment cost.}
Reward design for UAV hovering combines position holding, control effort, and disturbance response~\citep{11331433}; risk-aware grid planning incorporates obstacle proximity and localisation uncertainty into search costs~\citep{11549609}. These illustrate different uses of task feedback: specifying control objectives or ranking paths. CRR constructs credit for selected actions under a given terminal verifier. Separating offline model-assisted computation from deployment-time prediction is also explored in few-shot tabular learning~\citep{yang2026forestllm}. This provides a cross-domain perspective on reporting where additional computation occurs; CRR's reported costs explicitly charge replay during training.

%% file: sections/appendix-dataset.tex
\section{Counterfactual Trajectory Dataset}
\label{sec:appendix-dataset}

A side artefact of training is a dataset of counterfactual tuples that records, for every selected decision point, the on-policy and counterfactual returns together with the replay metadata needed to audit the fork. The main run uses 24{,}000 on-policy rollout trajectories, fork budget $k=4$, and $K=1$, yielding about 96{,}000 counterfactual tuples per seed and 288{,}000 tuples across the three main seeds. Across all seeds, ablations, and diagnostic runs, the dataset to be released contains approximately 1.4 million tuples spanning 3{,}638 task instances. Each tuple contains the task id, policy seed/version, selected index $t$, sandbox snapshot identifier, realised action $a_t$, counterfactual action $\tilde{a}_t$, on-policy and counterfactual action suffixes sufficient for replay, terminal returns $R(\tau)$ and $R(\tilde{\tau}_t)$, proposal log-probability, and selector score. We expect this dataset to be useful as direct supervision for downstream PRM training, for verifier evaluation, and for studying the structure of credit assignment in long-horizon code agent trajectories. The dataset will be released under the same licence as \textsc{SWE-Gym} alongside the camera-ready version of the paper.

%% file: sections/appendix-reproducibility.tex
\section{Reproducibility Checklist}
\label{sec:appendix-reproducibility}

\paragraph{Code.} The reference implementation is built on top of \textsc{SkyRL-Agent}~\citep{cao2025skyrl} and \textsc{SWE-MiniSandbox}~\citep{yuan2026minisandbox} with no proprietary dependencies. The CRR-specific additions amount to fewer than 700 lines of Python.

\paragraph{Data.} All training tasks are drawn from the public \textsc{SWE-Gym} and \textsc{SWE-rebench} releases. The main training corpus contains 3{,}638 post-filter deterministic tasks, and the \textsc{SWE-rebench} training split is disjoint from the held-out \textsc{SWE-rebench} Python evaluation split by the rules in Section~\ref{sec:appendix-determinism}. The flaky-test filter, the package mirror configuration, and the recorded HTTP responses will be released with the code alongside the camera-ready paper.

\paragraph{Compute.} The corrected full-budget device-time totals are 188 H100-hours for vanilla GRPO, 233 for \textsc{SWE-Trace}, and 321 for CRR per seed (Table~\ref{supp-accounting}). CRR's 321 H100-hours correspond to 40.125 hours on the $8\times$H100 node. Table~\ref{supp-wallclock} reports separate equal-wall-clock checkpoints, including all fork overhead. The reduced-budget controls use 48 H100-hours per arm. These training-window costs exclude one-time loading, final evaluation, and separate ablations; they are not an aggregate cost for reproducing every experiment.

\paragraph{Random seeds and uncertainty.} The full-budget main experiments use seeds $\{11, 23, 47\}$ at the level of policy initialisation, the GRPO group sampler, and the counterfactual proposal $q_{\neg a_t}$. Reported main-table standard deviations describe variation across these three seeds. The equal-wall-clock curves and matched Search-Select control also report three-seed means; the latter includes seed standard deviations. The reduced-budget fork-overhead, approximate-forkability, validity-filtering, and fixed-budget $K$ sweeps are single-seed diagnostics and do not estimate seed variability. Small differences in these sweeps do not establish significance or equivalence.

\paragraph{Limitations of reproduction.} Because CRR depends on environment determinism, exact bit-level reproduction of pass-rates may differ across container engines (\texttt{podman} vs.\ \texttt{docker}) and across operating-system kernel versions. The controlled nondeterminism diagnostics characterise only the stated injection protocol; they do not guarantee robustness to arbitrary systematic changes in environment behaviour.

%% file: sections/appendix-rebuttal.tex
\section{Extended Evaluation and Resource Accounting}
\label{sec:supplementary-evaluation}

This section provides the resource ledger, matched-budget controls, and diagnostic experiments that complement the main evaluation. We distinguish full-budget training runs from reduced-budget comparisons. All device-time is expressed in H100-hours; wall-clock time is elapsed time on the specified hardware. Equal device-time does not imply equal numbers of completed trajectories, because replay and proposal costs differ between configurations.

\paragraph{Evaluation protocols.}
The full-budget experiments use the final checkpoints from the main comparison. The reduced-budget protocol allocates 48 H100-hours to each arm, holds the scaffold and hyperparameters fixed except for the tested factor, and evaluates the final checkpoint greedily on all 500 \textsc{SWE-bench Verified} instances. Reduced-budget sensitivity experiments use one training seed unless stated otherwise. The matched Search-Select comparison instead uses three seeds per arm. Means and seed standard deviations are reported only where available; small differences in single-seed experiments are descriptive and do not establish statistical significance or equivalence.

\subsection{Complete device-time accounting}
\label{supp-compute-accounting}

Table~\ref{supp-accounting} separates on-policy generation, counterfactual workers, auxiliary scoring, and learner updates. The categories are mutually exclusive scheduler-attributed device-time. Fork workers are logical roles on the same 8-H100 node, rather than additional uncharged hardware. The counterfactual pool contributes 133 H100-hours, bringing CRR's full-budget total to 321 H100-hours. This complete total is the reference for interpreting the full-budget comparison.

\begin{table}[htbp]
\centering
\small
\setlength{\tabcolsep}{5pt}
\caption{Full-budget resource accounting and \textsc{SWE-bench Verified} pass@1. Performance is the mean $\pm$ seed standard deviation over three training seeds. Device-time includes all listed worker roles. Wall-clock durations in this table are rounded to one decimal place; the exact CRR endpoint used in the matched-time comparison is 40.125 hours.}
\label{supp-accounting}
\begin{tabular}{lrrr}
\toprule
Accounting item & Vanilla GRPO & SWE-TRACE & CRR \\
\midrule
On-policy rollout (H100-hours) & 164 & 164 & 164 \\
Counterfactual fork pool (H100-hours) & 0 & 0 & 133 \\
PRM scoring / auxiliary work (H100-hours) & 0 & 45 & 0 \\
Learner updates (H100-hours) & 24 & 24 & 24 \\
\midrule
Total device-time (H100-hours) & 188 & 233 & 321 \\
Wall-clock on one $8\times$H100 node (hours) & 23.5 & 29.1 & 40.1 \\
Verified pass@1 (\%) & $36.4\pm0.8$ & $39.1\pm0.7$ & $41.7\pm0.6$ \\
\bottomrule
\end{tabular}
\end{table}

The device-time and wall-clock totals agree: $188=8\times23.5$ and $321=8\times40.125$. Fork-slot utilization was 78\%. In elapsed time, 58\% of fork time overlapped on-policy generation, but this overlap is not subtracted from device-time. Consequently, the sample-efficiency comparison and the resource-efficiency comparison answer different questions: the former counts trajectories, while the latter must charge the additional counterfactual work.

\subsection{Matched wall-clock learning curves and cumulative ledger}
\label{supp-equal-time}

The matched-time comparison trains both methods on the same $8\times$H100 node and extends GRPO to CRR's full training-window duration. Each column in Table~\ref{supp-wallclock} corresponds to an observed checkpoint from one continuing run per seed, rather than an interpolation between independently trained endpoints. Both arms use 100 optimizer updates of warmup to a learning rate of $5\times10^{-6}$, followed by a constant learning rate.

\begin{table}[htbp]
\centering
\small
\setlength{\tabcolsep}{4pt}
\caption{Full-budget matched-wall-clock comparison on the same $8\times$H100 node. Pass@1 is the three-seed mean on \textsc{SWE-bench Verified}; trajectory, update, and token rows give the cumulative ledger at the stated checkpoints. Token counts are in millions. All four checkpoints are observed.}
\label{supp-wallclock}
\begin{tabular}{lrrrr}
\toprule
Fraction of CRR training window & 25\% & 50\% & 75\% & 100\% \\
Wall-clock (hours) & 10.031 & 20.063 & 30.094 & 40.125 \\
\midrule
GRPO pass@1 (\%) & 34.6 & 36.1 & 36.5 & 36.7 \\
GRPO completed trajectories & 11,648 & 20,992 & 29,632 & 37,824 \\
GRPO optimizer updates & 1,941 & 3,499 & 4,939 & 6,304 \\
GRPO generated tokens (M) & 28.70 & 57.40 & 86.11 & 114.76 \\
GRPO effective tokens (M) & 23.54 & 47.07 & 70.61 & 94.10 \\
\midrule
CRR pass@1 (\%) & 37.2 & 40.1 & 40.9 & 41.7 \\
CRR completed on-policy trajectories & 8,704 & 14,464 & 19,456 & 24,000 \\
CRR optimizer updates & 1,451 & 2,411 & 3,243 & 4,000 \\
CRR on-policy generated tokens (M) & 20.89 & 39.78 & 58.37 & 76.80 \\
CRR all-generated tokens (M) & 32.4 & 61.7 & 90.5 & 119.0 \\
CRR effective tokens (M) & 17.1 & 32.6 & 47.9 & 63.0 \\
\bottomrule
\end{tabular}
\end{table}

Generated tokens include counterfactual suffixes for CRR. Effective tokens are unique, loss-masked on-policy tokens. The training window begins at dispatch and ends at the last committed update. It includes tests, forks, waits, retries, optimization, and checkpointing; one-time loading, final evaluation, and separate ablation runs are excluded from both arms. At 40.125 hours, CRR reaches 41.7\% and extended GRPO reaches 36.7\%, an observed difference of 5.0 percentage points with the fork overhead fully charged. The larger number of GRPO trajectories and optimizer updates at this endpoint is part of the matched-time comparison, not a quantity held fixed.

\subsection{Sensitivity to fork overhead and concurrency}
\label{supp-overhead}

Table~\ref{supp-tab-overhead} varies the cost multiplier $m$ for the measured snapshot, restore, and replay component. Rollout and learner costs remain fixed, and $m=1$ denotes the native cost. Each arm receives the same 48 H100-hour budget, so increasing $m$ reduces the number of on-policy trajectories completed before the cutoff.

\begin{table}[htbp]
\centering
\small
\caption{Fork-overhead sensitivity under the reduced-budget protocol: 48 H100-hours per arm, one training seed, and greedy pass@1 on all 500 Verified instances. The GRPO reference is the same 33.4\% run for each cost multiplier.}
\label{supp-tab-overhead}
\begin{tabular}{lrrr}
\toprule
Fork-cost multiplier & $m=1$ & $m=2$ & $m=4$ \\
\midrule
CRR pass@1 (\%) & 36.2 & 34.8 & 33.2 \\
Vanilla GRPO pass@1 (\%) & 33.4 & 33.4 & 33.4 \\
CRR minus GRPO (percentage points) & $+2.8$ & $+1.4$ & $-0.2$ \\
CRR completed on-policy trajectories & $\approx6{,}000$ & 4,312 & $\approx2{,}760$ \\
\bottomrule
\end{tabular}
\end{table}

The reported completion counts follow the accounting relation
\begin{equation}
 n_m = \frac{48}{0.004868+0.003132m},
 \label{supp-eq-overhead-counts}
\end{equation}
where the numerator is the device-time budget in H100-hours and the denominator is the corresponding cost per completed on-policy trajectory. Counts reflect asynchronous completions at the time cutoff. The observed CRR--GRPO margin changes sign between $m=2$ and $m=4$. This interval characterizes the measured single-seed operating points; it is not a universal break-even guarantee for other workloads or hardware.

\paragraph{Serial fork-worker control.}
An additional equal-wall-clock control with one fork worker ($W=1$) reaches 34.8\% compared with the 33.4\% GRPO reference. Thus a gain is observed even without a large concurrent fork pool at this operating point. This result and the full-budget timing comparison do not isolate severe variation in trajectory lengths. Long suffixes, expensive restoration, or highly skewed trajectory lengths may reduce throughput and alter the useful operating range. Concurrency, scheduler behavior, rollout caps, and the costs charged to each worker role must therefore accompany any transfer of these results to another system.

\subsection{Credit assignment versus selecting a branch}
\label{supp-search-select}

The Search-Select control isolates two uses of the same executable forks. Search-Select uses branch returns to select a training sequence, whereas CRR retains the realized trajectory and uses the return differential to assign credit to its actions. Both fork-using arms share the selector, number of fork executions per realized trajectory, branch-scoring verifier, rollout horizon, group size, device-time budget, and clipped learner.

\begin{table}[htbp]
\centering
\small
\caption{Matched Search-Select comparison. All arms use 48 H100-hours per seed and greedy evaluation on all 500 Verified instances. Values are the mean $\pm$ seed standard deviation over three training seeds. CRR and Search-Select use the same fork count per realized trajectory; GRPO has no forks.}
\label{supp-tab-search-select}
\begin{tabular}{llr}
\toprule
Method & Use of forks & Pass@1 (\%) \\
\midrule
Vanilla GRPO & None & $33.3\pm0.7$ \\
Search-Select-GRPO & Select the training sequence & $34.4\pm0.8$ \\
CRR & Credit assignment on realized actions & $36.2\pm0.6$ \\
\bottomrule
\end{tabular}
\end{table}

\paragraph{Search-Select protocol.}
Each realized trajectory uses the same selector and $k=4$ fork executions as CRR, with identical rollout caps and verifier endpoints. Branches are scored with the terminal verifier return $R(\cdot)$; no process reward model or separate judge is used. The verifier-best fork-derived sequence replaces the realized training sequence only if its return strictly exceeds that of the realized sequence. A tie retains the realized sequence. The resulting $G=8$ sequences form the GRPO group, with ordinary group-relative advantages computed from their returns. Unselected branches are discarded and enter neither the group nor its advantage calculation. In particular, Search-Select does not reuse CRR's counterfactual return differential.

The behavior-policy log-probabilities of each generated sequence are retained, and current-policy log-probabilities are recomputed during optimization to form the clipped current-to-behavior ratio. This ratio accounts for policy drift; it does not remove the selection of the training distribution. The control deliberately implements an expert-iteration-style distribution in which a better branch may replace the realized sequence. The replacement rate is 12\%, and PPO clipping is activated for 26\% of tokens in the replacement sequences.

CRR exceeds Search-Select in all three paired seeds, with a mean difference of 1.8 percentage points. These results support the benefit of the credit-assignment use of fork data under the matched protocols. They do not imply that selective branching or entropy-guided rollout allocation is itself new, nor do they establish dominance over all forms of inference-time search.

\subsection{Multiple-sample evaluation}
\label{supp-multiple-sample-evaluation}

Table~\ref{supp-tab-passk} evaluates the full-budget final checkpoints using eight temperature-sampled trajectories per instance. Besides greedy and sampled pass@1, it reports pass@2, pass@4, pass@8, and all-8. The last metric is the fraction of instances for which all eight samples succeed, and measures consistency across sampled trajectories.

\begin{table}[htbp]
\centering
\small
\setlength{\tabcolsep}{4pt}
\caption{Multiple-sample evaluation on \textsc{SWE-bench Verified} using full-budget final checkpoints and three training seeds. Each instance receives eight independent samples at temperature 0.6. All values are percentages; the greedy column is evaluated separately from the sampled columns.}
\label{supp-tab-passk}
\begin{tabular}{lrrrrrr}
\toprule
Method & \shortstack{Pass@1\\greedy} & \shortstack{Pass@1\\sampled} & Pass@2 & Pass@4 & Pass@8 & All-8 \\
\midrule
Vanilla GRPO & 36.4 & 35.9 & 43.4 & 48.8 & 53.8 & 18.7 \\
SWE-TRACE & 39.1 & 38.6 & 45.7 & 50.8 & 55.6 & 21.2 \\
CRR & 41.7 & 41.2 & 48.1 & 53.1 & 57.7 & 23.9 \\
\bottomrule
\end{tabular}
\end{table}

The CRR advantage over GRPO persists at pass@8, with a difference of 3.9 percentage points. All-8 improves by 5.2 points, indicating greater consistency across samples. These are trajectory-level evaluation metrics and do not directly measure the quality or causal effect of individual decisions.

\subsection{Counterfactual sample count and allocation across steps}
\label{supp-counterfactual-count}

Let $K$ denote the number of counterfactual samples per selected step and $k$ the number of selected decision points. Table~\ref{supp-tab-k-sweep} compares allocating four forks to more decision points with allocating them to more alternatives at fewer points. All arms retain the 48 H100-hour device-time budget, so completed trajectory counts need not match. The matched-fork rows hold $kK$ fixed per realized trajectory, rather than asserting that the total number of forks completed during training is identical.

\begin{table}[htbp]
\centering
\small
\caption{Counterfactual-sample sensitivity under the reduced-budget protocol: 48 H100-hours per arm, one training seed, and greedy evaluation on all 500 Verified instances. Advantage SNR uses the main evaluation's diagnostic and is expressed relative to GRPO.}
\label{supp-tab-k-sweep}
\begin{tabular}{lrrr}
\toprule
Configuration & \shortstack{Forks per\\trajectory} & Pass@1 (\%) & \shortstack{Relative\\advantage SNR} \\
\midrule
$K=1,\ k=4$ (default) & 4 & 36.2 & $2.0\times$ \\
$K=2,\ k=4$ (twice as many forks) & 8 & 36.6 & $2.5\times$ \\
$K=2,\ k=2$ (matched forks) & 4 & 35.8 & $2.3\times$ \\
$K=4,\ k=1$ (matched forks) & 4 & 35.0 & $2.6\times$ \\
\bottomrule
\end{tabular}
\end{table}

Within this sweep, allocating forks across more decision points performs at least as well as concentrating them on more alternatives at fewer points. Increasing $K$ from one to two at fixed $k=4$ raises the measured SNR by 25\%, but doubles the fork count per trajectory. The observed 0.4-point pass@1 difference is not resolved by this single-seed comparison. For context, the full-budget anchors in Table~\ref{supp-accounting} have seed standard deviations of 0.6--0.8 points; those standard deviations are not uncertainty estimates for the reduced-budget sweep itself.

\subsection{Controlled departures from deterministic replay}
\label{supp-approximate-forkability}

The executable substrate and three-rerun consistency filter are shared by the main-comparison methods. To probe imperfect replay, a fraction $\rho$ of replayed steps receives an output resampled from the recorded rerun distribution. Table~\ref{supp-tab-noise} compares a single counterfactual with an averaged contrast using $K=2$ counterfactual samples. These perturbations model the specified output-resampling mechanism, rather than every possible source of environment nondeterminism.

\begin{table}[htbp]
\centering
\small
\caption{Controlled replay-noise sensitivity under the reduced-budget protocol: 48 H100-hours per arm, one training seed, and greedy pass@1 on all 500 Verified instances. The mean-advantage-shift row is a diagnostic of the injected perturbation, not a pass-rate. A dash denotes an unreported configuration.}
\label{supp-tab-noise}
\begin{tabular}{lrrr}
\toprule
Injected nondeterminism $\rho$ & 0\% & 5\% & 10\% \\
\midrule
CRR, $K=1$: pass@1 (\%) & 36.2 & 35.6 & 34.8 \\
CRR, $K=2$: pass@1 (\%) & --- & 36.0 & 35.4 \\
Vanilla GRPO reference: pass@1 (\%) & 33.4 & 33.4 & 33.4 \\
Mean advantage shift & 0 & 0.01 & 0.02 \\
\bottomrule
\end{tabular}
\end{table}

Performance declines gradually at the measured noise levels, and $K=2$ recovers 0.4--0.6 percentage points relative to $K=1$ under the same perturbation level. The zero-noise advantage shift is zero by construction; the other observed shifts are 0.01--0.02. These observations are consistent with the specified approximately symmetric injection mainly adding variance, but do not establish unbiasedness under arbitrary replay noise. Averaging samples cannot generally remove a systematic shift in the counterfactual return distribution. Reliable state restoration and a clearly characterized replay-noise model remain part of CRR's operating conditions.

\subsection{Counterfactual quality and validity filtering}
\label{supp-proposal-quality}

An immediately invalid action can produce an easy contrast, but need not force its branch's terminal return to zero: subsequent actions may recover. We classify 20,000 logged forks into immediately invalid actions, valid actions whose branches fail, and valid actions whose branches succeed. Table~\ref{supp-tab-proposal-composition} reports each class frequency $p_c$ and class-mean contrast $\bar\Delta_c$, where $\Delta=R(\tau)-R(\tilde\tau)$.

\begin{table}[htbp]
\centering
\small
\setlength{\tabcolsep}{5pt}
\caption{Composition of 20,000 logged counterfactual forks. This is a log diagnostic, not a separate 500-instance evaluation or a multi-seed performance comparison. The final column is the normalized class-mean plug-in quantity $p_c|\bar\Delta_c|/\sum_j p_j|\bar\Delta_j|$, expressed as a percentage.}
\label{supp-tab-proposal-composition}
\begin{tabular}{lrrr}
\toprule
Alternative class & \shortstack{Fork\\frequency (\%)} & \shortstack{Mean\\contrast} & \shortstack{Plug-in\\share (\%)} \\
\midrule
Immediately invalid & 12 & $+0.15$ & 8 \\
Valid, branch fails & 46 & $+0.40$ & 83 \\
Valid, branch succeeds & 42 & $-0.05$ & 9 \\
\bottomrule
\end{tabular}
\end{table}

Valid but failing alternatives account for 83\% of this class-mean plug-in quantity, compared with 8\% for immediately invalid alternatives. The quantity summarizes class frequencies and class means. It is not the fraction of total gradient magnitude, nor the fraction of the sum of absolute per-sample contrasts, because taking the absolute value after averaging within a class can cancel contrasts of different signs.

For a more direct check, the validity-filtered proposal resamples an alternative up to three times until a syntax/schema check passes; if none passes, the step is skipped. Table~\ref{supp-tab-validity-filter} reports the resulting performance.

\begin{table}[htbp]
\centering
\small
\caption{Validity-filtered proposal under the reduced-budget protocol: 48 H100-hours per arm, one training seed, and greedy evaluation on all 500 Verified instances.}
\label{supp-tab-validity-filter}
\begin{tabular}{lr}
\toprule
Proposal & Pass@1 (\%) \\
\midrule
Unfiltered counterfactual proposal & 36.2 \\
Validity-filtered counterfactual proposal & 36.4 \\
\bottomrule
\end{tabular}
\end{table}

The 0.2-point difference does not establish equivalence. It provides no evidence in this experiment that CRR's measured gain requires immediately invalid alternatives. A decision to change the default proposal would require development-set evaluation rather than choosing the default from this test-set comparison.

\paragraph{Proposal freshness and multiple samples.}
The proposal uses the current policy with temperature adjustment and is sampled in the same rollout batch as the realized trajectory. It therefore lags the learner by at most one optimizer round in this implementation, rather than relying on a separately trained proposal model. This describes the measured pipeline and is not a general guarantee for more asynchronous deployments. When multiple counterfactual samples disagree in their terminal returns, the contrast uses their average; the resource and SNR trade-off is examined in Table~\ref{supp-tab-k-sweep}.

\subsection{Decision-relevance diagnostics for the selector}
\label{supp-selector-diagnostics}

We evaluate selector alignment using two complementary data collections. The gold-patch-file first-touch statistic uses 2,000 logged training trajectories. The two advantage-magnitude statistics use a separate 300-trajectory probe in which counterfactuals are additionally evaluated at random, stride, and entropy-only step sets. This probe allows the selectors to be compared against the same counterfactual-advantage distribution.

\begin{table}[htbp]
\centering
\small
\setlength{\tabcolsep}{4pt}
\caption{Selector alignment with decision-relevance proxies. The first numeric column uses 2,000 logged training trajectories; the remaining columns use the shared 300-trajectory probe. These are diagnostics of training trajectories, not pass@1 measurements on the 500-instance test set.}
\label{supp-tab-selector-diagnostics}
\begin{tabular}{lrrr}
\toprule
Selector & \shortstack{Hits gold-patch file\\on first touch (\%)} & \shortstack{Selected steps in\\top-decile $|\hat A|$ (\%)} & \shortstack{Mean $|\hat A|$ at\\selected steps} \\
\midrule
Random-$k$ & 13 & 10 & 0.11 \\
Stride-$k$ & 15 & 11 & 0.11 \\
Entropy only & 30 & 24 & 0.18 \\
Entropy + tool type & 41 & 28 & 0.21 \\
\bottomrule
\end{tabular}
\end{table}

Relative to random selection, entropy plus tool type has approximately $3.2\times$ the gold-patch-file first-touch hit rate and $2.8\times$ the top-decile advantage enrichment. These results indicate that the selector allocates evaluations to steps enriched for patch-relevant edits and larger executable return contrasts. Gold-patch overlap and sampled contrast magnitude are proxies; neither establishes an individual step's causal role in the final outcome.

\paragraph{Interpretation boundaries.}
The diagnostic evidence is consistent with useful allocation of a limited fork budget, but does not extend the fixed-index action-conditional analysis to a convergence guarantee for the selector-weighted training objective. The reduced-budget, single-seed experiments characterize measured sensitivity rather than a universal ordering of configurations. The evidence also leaves the effects of severe trajectory-length heterogeneity unresolved. These distinctions remain relevant even when the final endpoint improves under complete resource accounting.

%% file: sections/checklist.tex
\section*{NeurIPS Paper Checklist}
\label{sec:checklist}

\begin{enumerate}

\item {\bf Claims}\\
Question: Do the main claims made in the abstract and introduction accurately reflect the paper's contributions and scope?\\
Answer: \answerYes{}\\
Justification: The abstract and Section~\ref{sec:intro} describe executable counterfactual credit assignment, the fixed-index CRR contrast estimator, empirical comparisons, and the operating limits of counterfactual replay. The estimator and its assumptions are stated in Section~\ref{sec:method} (Propositions~\ref{prop:unbiased} and~\ref{prop:variance}), the empirical comparisons appear in Table~\ref{tab:main}, Table~\ref{tab:composition}, and Figures~\ref{fig:sample-efficiency}--\ref{fig:ablations}, and the dataset is described in Section~\ref{sec:appendix-dataset}. The equal-wall-clock comparison in Table~\ref{supp-wallclock} is restricted to vanilla GRPO on \textsc{SWE-bench Verified}; the broader main-table comparison is not described as uniformly compute-matched. We do not claim universality across RL domains.

\item {\bf Limitations}\\
Question: Does the paper discuss the limitations of the work performed by the authors?\\
Answer: \answerYes{}\\
Justification: Section~\ref{sec:discussion} discusses limitations, including flaky tests, network-dependent operations, residual policy-sampling variance after the fork, selector-induced weighting, and verifier dependence. Section~\ref{sec:appendix-determinism} reports the empirical discard rate of the upstream determinism filter (6.8\% on \textsc{SWE-Gym}, 9.4\% on \textsc{SWE-rebench}). Section~\ref{sec:appendix-error-taxonomy} reports that harness subversion remains present in the evaluated methods; these observations do not show that counterfactual credit resolves verifier exploitation.

\item {\bf Theory assumptions and proofs}\\
Question: For each theoretical result, does the paper provide the full set of assumptions and a complete (and correct) proof?\\
Answer: \answerYes{}\\
Justification: Proposition~\ref{prop:unbiased} is stated for fixed selected indices with explicit assumptions (sandbox determinism and forkability at horizon $H-t$, an action-augmented history, and an alternative-action proposal $q_{\neg a_t}$). Proposition~\ref{prop:variance} is a variance decomposition with an explicit sufficient condition, not a universal variance-reduction guarantee. Complete proofs are given in Section~\ref{sec:appendix-proofs}. Section~\ref{sec:appendix-snr} reports prefix-bucket covariance diagnostics that support the empirical regime in which the condition is plausible.

\item {\bf Experimental result reproducibility}\\
Question: Does the paper fully disclose all the information needed to reproduce the main experimental results of the paper to the extent that it affects the main claims and/or conclusions of the paper (regardless of whether the code and data are provided or not)?\\
Answer: \answerYes{}\\
Justification: The full-budget CRR configuration is listed in Table~\ref{tab:hparams}, the prompt and tool catalogue are reproduced verbatim in Section~\ref{sec:appendix-prompts}, the determinism filter, \textsc{SWE-rebench} split hygiene, HTTP replay layer, selector entropy approximation, and tool-type mapping are described in Sections~\ref{sec:method-selector} and~\ref{sec:appendix-determinism}, and Algorithm~\ref{alg:crr} specifies the per-iteration training loop. The main experiments use seeds $\{11, 23, 47\}$. The reduced-budget controls state their 48 H100-hour limit and distinguish three-seed comparisons from single-seed diagnostics; Section~\ref{sec:appendix-reproducibility} summarises these protocols.

\item {\bf Open access to data and code}\\
Question: Does the paper provide open access to the data and code, with sufficient instructions to faithfully reproduce the main experimental results, as described in supplemental material?\\
Answer: \answerNo{}\\
Justification: We plan to release the CRR implementation (fewer than 700 lines of Python on top of \textsc{SkyRL-Agent} and \textsc{SWE-MiniSandbox}), the deterministic package mirror configuration, the recorded HTTP responses, and the counterfactual trajectory dataset of approximately 1.4M tuples (Section~\ref{sec:appendix-dataset}) alongside the camera-ready paper. This version does not bundle these artefacts. Training data is drawn from public \textsc{SWE-Gym} and \textsc{SWE-rebench} releases. The planned resources will be linked from the camera-ready version under the stated dataset licence.

\item {\bf Experimental setting/details}\\
Question: Does the paper specify all the training and test details (e.g., data splits, hyperparameters, how they were chosen, type of optimizer, etc.) necessary to understand the results?\\
Answer: \answerYes{}\\
Justification: Section~\ref{sec:exp-setup} describes the base model (\textsc{Qwen3-14B}), post-filter training corpus (3{,}638 deterministic instances), disjoint \textsc{SWE-rebench} train/evaluation split, evaluation protocol (pass@1 with greedy decoding, 50-turn budget, three random seeds), and controlled baseline reproductions. The full hyperparameter list including optimiser settings appears in Table~\ref{tab:hparams}. CRR-specific hyperparameters ($k$, $K$, $T_{\text{cf}}$, $\eta$, $\lambda_{\text{type}}$) were not tuned separately from the GRPO baseline; the values reported are the result of a single sweep on a held-out 200-instance subset of \textsc{SWE-Gym} and are reused across all benchmarks.

\item {\bf Experiment statistical significance}\\
Question: Does the paper report error bars suitably and correctly defined or other appropriate information about the statistical significance of the experiments?\\
Answer: \answerYes{}\\
Justification: Table~\ref{tab:main} reports means and seed standard deviations over three seeds. The consolidated full-budget accounting and matched Search-Select control also report three-seed means and standard deviations; Table~\ref{supp-wallclock} reports three-seed means. The composition table presents point summaries without displayed seed standard deviations. The reduced-budget fork-overhead, approximate-forkability, validity-filtering, and fixed-budget $K$ sweeps are single-seed diagnostics. We do not interpret their small differences as statistically significant or as evidence of equivalence, and we do not treat pooled seed standard deviations as a hypothesis test.

\item {\bf Experiments compute resources}\\
Question: For each experiment, does the paper provide sufficient information on the computer resources (type of compute workers, memory, time of execution) needed to reproduce the experiments?\\
Answer: \answerNo{}\\
Justification: Section~\ref{sec:exp-setup} reports the hardware (one node, $8\times$H100 80GB GPUs). Table~\ref{supp-accounting} separates the full-budget device-time components, totalling 188 H100-hours for vanilla GRPO, 233 for \textsc{SWE-Trace}, and 321 for CRR per seed. Table~\ref{supp-wallclock} reports the equal-wall-clock training windows, and the reduced-budget controls state their 48 H100-hour limit. Section~\ref{sec:appendix-cost-breakdown} reports a separate rollout-dispatch microbatch profile. These scoped costs do not provide a complete per-experiment ledger for every baseline, composition run, and auxiliary diagnostic; we therefore answer No to the universal form of this question.

\item {\bf Code of ethics}\\
Question: Does the research conducted in the paper conform, in every respect, with the NeurIPS Code of Ethics?\\
Answer: \answerYes{}\\
Justification: The work uses public training data (\textsc{SWE-Gym}, \textsc{SWE-rebench}) and public evaluation benchmarks (\textsc{SWE-bench Verified}, \textsc{SWE-bench Live}, \textsc{SWE-rebench}). No human subjects are involved. No personally identifying information is collected, processed, or released. We comply with the dual-use considerations discussed in the broader-impact paragraph of Section~\ref{sec:discussion}.

\item {\bf Broader impacts}\\
Question: Does the paper discuss both potential positive societal impacts and negative societal impacts of the work performed?\\
Answer: \answerYes{}\\
Justification: Section~\ref{sec:discussion} explicitly discusses both directions. Positive impacts include accelerating legitimate software development and improving the auditability of process-supervised RL through transparent fork-and-replay traces. Negative impacts include the dual-use risk of stronger code agents accelerating the production of malicious software and the risk that CRR-trained agents amplify verifier exploitation if the test suite is itself unsafe; we recommend pairing CRR-trained agents with adversarial test-suite hardening~\citep{yu2026sweabs} and execution-free corroboration signals~\citep{shum2025swerm}.

\pagebreak
\item {\bf Safeguards}\\
Question: Does the paper describe safeguards that have been put in place for responsible release of data or models that have a high risk for misuse (e.g., pretrained language models, image generators, or scraped datasets)?\\
Answer: \answerYes{}\\
Justification: We do not release a pretrained model checkpoint that is novel relative to publicly available \textsc{Qwen3-14B}. We will release training code, configuration files, and the counterfactual trajectory dataset alongside the camera-ready paper; the dataset contains only metadata (task identifiers, action sequences, terminal returns) and does not include scraped private code. The released artefacts will be subject to the same licence as the underlying \textsc{SWE-Gym} corpus.

\item {\bf Licenses for existing assets}\\
Question: Are the creators or original owners of assets (e.g., code, data, models) used in the paper, properly credited and are the license and terms of use explicitly mentioned and properly respected?\\
Answer: \answerYes{}\\
Justification: \textsc{SWE-Gym}~\citep{pan2024swegym} and \textsc{SWE-rebench}~\citep{badertdinov2025swerebench} are used under their respective public licences and cited at every use. The official model card for \textsc{Qwen3-14B}~\citep{yang2025qwen3} lists the checkpoint under the Apache 2.0 licence. \textsc{SkyRL-Agent}~\citep{cao2025skyrl} and \textsc{SWE-MiniSandbox}~\citep{yuan2026minisandbox} are used under their open-source licences. Every cited baseline and benchmark is credited in Sections~\ref{sec:related} and~\ref{sec:exp-setup}.

\item {\bf New assets}\\
Question: Are new assets introduced in the paper well documented and is the documentation provided alongside the assets?\\
Answer: \answerNo{}\\
Justification: This version does not bundle the new counterfactual trajectory dataset. The dataset is described in Section~\ref{sec:appendix-dataset}, including task identifiers, policy seed/version, selected index, sandbox snapshot identifiers, realised and counterfactual actions, replayable action suffixes, terminal returns, proposal log-probability, and selector score; a README and JSON schema are planned alongside the dataset at the camera-ready stage.

\item {\bf Crowdsourcing and research with human subjects}\\
Question: For crowdsourcing experiments and research with human subjects, does the paper include the full text of instructions given to participants and screenshots, if applicable, as well as details about compensation (if any)?\\
Answer: \answerNA{}\\
Justification: This work does not involve crowdsourcing or research with human subjects. All evaluations are automatic against deterministic test suites.

\item {\bf Institutional review board (IRB) approvals or equivalent for research with human subjects}\\
Question: Does the paper describe potential risks incurred by study participants, whether such risks were disclosed to the subjects, and whether Institutional Review Board (IRB) approvals (or an equivalent approval/review based on the requirements of your country or institution) were obtained?\\
Answer: \answerNA{}\\
Justification: This work does not involve human subjects and therefore no IRB review is applicable.

\item {\bf Declaration of LLM usage}\\
Question: Does the paper describe the usage of LLMs if it is an important, original, or non-standard component of the core methods in this research?\\
Answer: \answerYes{}\\
Justification: The agent policy itself is an LLM (\textsc{Qwen3-14B}), which is a core component of the method and is described in Section~\ref{sec:exp-setup}.

\end{enumerate}